\newcommand{\cB}{{\cal B}}
\newcommand{\cG}{{\cal G}}
\newcommand{\cS}{{\cal S}}
\newcommand{\cF}{{\cal F}}
\newcommand{\cC}{{\cal C}}
\newcommand{\cP}{{\cal P}}
\renewcommand{\leq}{\leqslant}
\renewcommand{\geq}{\geqslant}
\newcommand{\Cref}[1]{Co\-rol\-la\-ry\,\ref{#1}}
\newcommand{\sP}{\cP}
\newcommand{\sG}{\cG}
\newcommand{\Gr}{\smash{{\sG\kern-1.5pt}_q\kern-0.5pt(n,k)}}
\newcommand{\Grr}{\smash{{\sG\kern-1.5pt}_q\kern-0.5pt(n,k+1)}}
\newcommand{\Gk}{\smash{{\sG\kern-1.5pt}_q\kern-0.5pt(n,k_1)}}
\newcommand{\Gkk}{\smash{{\sG\kern-1.5pt}_q\kern-0.5pt(n,k_2)}}
\newcommand{\Grtwo}{\smash{{\sG\kern-1.5pt}_2\kern-0.5pt(n,k)}}
\newcommand{\Gkone}{\smash{{\sG\kern-1.5pt}_q\kern-0.5pt(n,k_1)}}
\newcommand{\Gktwo}{\smash{{\sG\kern-1.5pt}_q\kern-0.5pt(n,k_2)}}
\newcommand{\Ps}{\smash{{\sP\kern-2.0pt}_q\kern-0.5pt(n)}}
\newcommand{\linadd}{\kern1pt\mbox{\small$\boxplus$}\kern1pt}
\newtheorem{definition}{Definition}
\newtheorem{theorem}{Theorem}
\newtheorem{lemma}{Lemma}
\newtheorem{cor}{Corollary}
\newtheorem{example}{Example}
\newtheorem{construction}{Construction}
\newtheorem{remark}{Remark}
\newtheorem{claim}{Claim}
\begin{document}

\title{Multiset Combinatorial Batch Codes}

\author[1]{Hui Zhang\thanks{The work of the first author is supported in part at the Technion by a fellowship of the Israel Council of Higher Education.}}
\author[1]{Eitan Yaakobi}
\author[2]{Natalia Silberstein}

\affil[1]{\small Computer Science Department, Technion---Israel Institute of Technology, Haifa, Israel. \par email: huizhang@cs.technion.ac.il, yaakobi@cs.technion.ac.il}
\affil[2]{\small Yahoo! Labs, Haifa, Israel. email: natalys@cs.technion.ac.il}


\date{}
\maketitle
\begin{abstract}
\emph{Batch codes}, first introduced by Ishai, Kushilevitz, Ostrovsky, and Sahai, mimic a distributed storage of a set of $n$ data items on $m$ servers, in such a way that any batch of $k$ data items can be retrieved by reading at most some $t$ symbols from each server. \emph{Combinatorial batch codes}, 
are replication-based batch codes in which each server stores a subset of the data items.

In this paper, we propose a generalization of combinatorial batch codes, called \emph{multiset combinatorial batch codes} (\emph{MCBC}), in which $n$ data items are stored in $m$ servers, such that any multiset request of $k$ items, where any item is requested at most  $r$ times, can be retrieved by reading at most $t$ items from each server. The setup of this new family of codes is motivated by recent work on codes which enable high availability and parallel reads in distributed storage systems. 
The main problem under this paradigm is to minimize the number of items stored in the servers, given the values of $n,m,k,r,t$, which is denoted by $N(n,k,m,t;r)$. 
We first give a necessary and sufficient condition for the existence of MCBCs. Then, we present several bounds on $N(n,k,m,t;r)$ and constructions of MCBCs. In particular, we determine the value of $N(n,k,m,1;r)$ for any $n\geq \left\lfloor\frac{k-1}{r}\right\rfloor{m\choose k-1}-(m-k+1)A(m,4,k-2)$, where $A(m,4,k-2)$ is the maximum size of a binary constant weight code of length $m$, distance four and weight $k-2$. We also determine the exact value of $N(n,k,m,1;r)$ when $r\in\{k,k-1\}$ or $k=m$.
\end{abstract}

\section{Introduction}\label{sec:BatchCodes}

\subsection{Background and Definitions}

Batch codes were first introduced by Ishai \textit{et al.} in~\cite{IKOS04} as a method to represent the distributed storage of a set of $n$ data items on $m$ servers. These codes were originally motivated by several applications such as load balancing in distributed storage, private information retrieval, and cryptographic protocols. Formally, these codes are defined as follows~\cite{IKOS04}.
\begin{definition}
\begin{enumerate}
\item  An $(n,N,k,m,t)$ {\em batch code} over an alphabet $\Sigma$, encodes a string $x\in \Sigma^n$ into an $m$-tuple of strings $y_1,\ldots,y_m\in \Sigma^*$ (called {\em buckets} or {\em servers}) of total length $N$, such that for each $k$-tuple (called {\em batch} or {\em request}) of distinct indices $i_1,\ldots,i_k\in[n]$, the $k$ data items $x_{i_1},\ldots,x_{i_k}$ can be decoded by reading at most $t$ symbols from each server.
\item An $(n,N,k,m,t)$ {\em multiset batch code} is an $(n,N,k,m,t)$ batch code which also satisfies the following property: For any multiset request of $k$ indices $i_1,\ldots,i_k\in[n]$ there is a partition of the buckets into $k$ subsets $S_1,\ldots,S_k \subseteq [m]$ such that each item $x_{i_j}, j\in[k]$, can be retrieved by reading at most $t$ symbols from each bucket in $S_j$. 
\end{enumerate}
\end{definition}



Yet another class of codes, called \emph{combinatorial batch codes} (CBC), is a special type of batch codes in which all encoded symbols are copies of the input items, i.e., these codes are replication-based. Several works have considered codes under this setup; see e.g.~\cite{BB14,BRR12,BKMS10,BT11C,BT11A,BT11B,BT12,BT15,PSW09,S14,SG13}. However, note that combinatorial batch codes are not multiset batch codes and don't allow to request an item more than once. 

Motivated by the works on codes which enable parallel reads for different users in distributed storage systems, for example,  codes with locality and availability \cite{RPDV2016,ZY2016}, we introduce a generalization of CBCs, named \emph{multiset combinatorial batch codes}. 

\begin{definition}
An $(n,N,k,m,t;r)$ {\em multiset combinatorial batch code (MCBC)} is a collection of subsets of $[n]$, $\cC=\{C_1,C_2,\dots,C_m\}$ (called {\em servers}) where $N=\sum_{j=1}^m |C_j|$, such that for each multiset request $\{i_1,i_2,\dots,i_k\}$, in which every element in $[n]$ has multiplicity at most $r$, there exist subsets $D_1,\ldots,D_m$, where for all $j\in[m]$, $D_j\subseteq C_j$ with $|D_j|\leq t$, and the multiset union\footnote{For any $i\in[n]$, the multiplicity of $i$ in the multiset union of the sets $D_j$ for $j\in[m]$ is the number of subsets that contain $i$, that is $|\{j\in[m]:i\in D_j\}|$.} of $D_j$ for $j\in[m]$ contains the multiset request $\{i_1,i_2,\dots,i_k\}$.
\end{definition}

In other words, an $(n,N,k,m,t;r)$-MCBC is a coding scheme which encodes $n$ items into $m$ servers, with total storage of $N$ items, such that any multiset request of items of size at most $k$, where any item can be repeated at most $r$ times, can be retrieved by reading at most $t$ items from each server. In particular, when $r=1$ we obtain a combinatorial batch code, and when $r=k$ and $t=1$ we obtain a multiset batch code based on replication.


\begin{example}
Let us consider the following $(n=5,N=15,k=5,m=5,t=1;r=2)$ MCBC,
$$\begin{array}{|c|c|c|c|c|}
\hline
1 & 1 & 2 & 2 & 3 \\
3 & 4 & 3 & 4 & 4 \\
5 & 5 & 5 & 5 & 5 \\
\hline
\end{array}$$
where the $i$-th column contains the indices of items stored in the server $C_i\in\cC$, $i\in[5]$. It is possible to verify that the code $\cC$ satisfies the requirements of a $(5,15,5,5,1;2)$-MCBC. For example, the multiset request $\{3,3,4,4,5\}$ can be read by taking the subsets $D_1=\{3\}$, $D_2=\{4\}$, $D_3=\{3\}$, $D_4=\{4\}$, $D_5=\{5\}$.
\end{example}

Similarly to the original problem of combinatorial batch codes, the  goal in this paper is to minimize the total storage $N$ given the parameters $n,m,k,t$ and $r$ of an MCBC. Let $N(n,k,m,t;r)$ be the smallest $N$ such that an $(n,N,k,m,t;r)$-MCBC exists. An MCBC is called {\em optimal} if $N$ is minimal given $n,m,k,t,r$. In this paper, we focus on the case $t=1$, and thus omit $t$ from the notation and write it as an $(n,N,k,m;r)$-MCBC and its minimum storage by $N(n,k,m;r)$. In case $r=1$, i.e. an MCBC is a CBC, we further omit $r$ and write it as an $(n,N,k,m)$-CBC and its minimum storage as $N(n,k,m)$.



In \cite{PSW09}, the authors studied another class of CBCs, called \emph{uniform combinatorial batch codes} (\emph{uniform CBCs}), in which each item is stored in the same number of servers. Several constructions of optimal uniform CBCs were given in~\cite{BB14,BRR12,PSW09,SG13}. In this paper we consider a slightly different class of MCBCs, in which each server stores the same number of items, and call these codes {\em regular multiset combinatorial batch codes} ({\em regular MCBCs}).

A regular $(n,N,k,m;r)$-MCBC is an MCBC in which each server stores the same number $\mu$ of items, where $\mu = N/m$. Given $n,m,k,r$, let $\mu(n,k,m;r)$ denote the smallest number of items stored in each server, then the optimal value of $N$ is determined by $\mu(n,k,m;r)$, that is, $N=m\mu(n,k,m;r)$.

\subsection{Previous Results on CBCs}

For CBCs, a significant amount of work has been done to study the value $N(n,k,m)$, and the exact value has been determined for a large range of parameters. We list some of the known results below (for more details see~\cite{BRR12,BKMS10,BT11A,BT11B,PSW09,SG13}).

\begin{theorem}
\label{CBC}

\begin{itemize}
$~$
\item[(i)] $N(n,k,k)=kn-k(k-1)$.
\item[(ii)] If $n\geq (k-1){m\choose k-1}$, then $N(n,k,m)=kn-(k-1){m\choose k-1}$.
\item[(iii)] If ${m\choose k-2}\leq n\leq (k-1){m\choose k-1}$, then $N(n,k,m)=(k-1)n-\left\lfloor\frac{(k-1){m\choose k-1}-n}{m-k+1}\right\rfloor$.
\item[(iv)] If ${m\choose k-2}-(m-k+1)A(m,4,k-3)\leq n\leq {m\choose k-2}$, then $N(n,k,m)=(k-2)n-\left\lfloor\frac{2({m\choose k-2}-n)}{m-k+1}\right\rfloor$ for $0\leq ({m\choose k-2}-n)\mod (m-k+1)<\frac{m-k+1}{2}$.
\item[(v)] $N(m+1,k,m)=m+k$.
\item[(vi)] Let $k$ and $m$ be integers with $2\leq k\leq m$, then $$N(m+2,k,m)=\begin{cases}
m+k-2+\lceil 2\sqrt{k+1}\rceil & \text{ if } m+1-k\geq \lceil \sqrt{k+1}\rceil, \\
2m-2+\left\lceil 1+\frac{k+1}{m+1-k}\right\rceil & \text{ if } m+1-k< \lceil \sqrt{k+1}\rceil. \\
\end{cases}$$
\item[(vii)] For all integers $n\geq m\geq 3$, $N(n,3,m)=\begin{cases}
2n-m+\left\lfloor\frac{n-3}{m-2}\right\rfloor & \text{ if } n\leq m^2-m, \\
3n-m^2+m & \text{ if } n\geq m^2-m. \\
\end{cases}$
\item[(viii)] For all integers $n\geq m\geq 4$,
$$N(n,4,m)=\begin{cases}
n & \text{ if n=m,} \\
2n-m+\left\lceil\frac{1+\sqrt{8n-8m+1}}{2}\right\rceil & \text{ if $m<n\leq \frac{m^2+6m}{8}$ and $m$ is even} \\
 & \text{~~~ or if $m<n\leq \frac{m^2+4m+3}{8}$ and $m$ is odd,} \\
2n-m+\left\lceil\frac{5+\sqrt{8n-16m+25}}{2}\right\rceil & \text{ if $\frac{m^2+6m+8}{8}\leq n<{m\choose 2}$ and $m$ is even} \\
 & \text{~~~ or if $\frac{m^2+4m+11}{8}<n<{m\choose 2}$ and $m$ is odd,} \\
2n-\frac{m-1}{2} & \text{ if $n=\frac{m^2+4m+11}{8}$ and $m$ is odd,} \\
3n-\left\lfloor\frac{m^2}{2}-\frac{n-m}{m-3}\right\rfloor & \text{ if ${m\choose 2}\leq n<3{m\choose 3}$,} \\
4n-3{m\choose 3} & \text{ if $3{m\choose 3}\leq n$.}
\end{cases}$$
\item[(ix)] For any prime power $q\geq 3$, $N(q^2+q-1,q^2-q-1,q^2-q)=q^3-q$.
\end{itemize}
\end{theorem}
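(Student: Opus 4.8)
These nine statements are collected from the existing combinatorial batch code literature, so strictly speaking the theorem is established by the cited works; nevertheless, let me sketch the unified machinery that underlies all of them. The plan is to first reduce every instance to a single combinatorial condition, then dispatch lower and upper bounds by two general techniques, and finally isolate where the genuine difficulty lies.

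First I would record the Hall-type characterization governing the case $t=1$. Model a CBC $\cC=\{C_1,\dots,C_m\}$ as a bipartite incidence structure between the $n$ items and the $m$ servers; by Hall's marriage theorem the retrieval requirement---every $k$ distinct items are matchable to distinct servers, reading one item each---is equivalent to: for every $j$ with $1\le j\le k-1$ and every $j$-subset $J\subseteq[m]$ of servers, at most $j$ items have all of their copies inside $J$. Writing $d_i$ for the number of servers holding item $i$, so that $N=\sum_i d_i$, this is the single constraint from which all of (i)--(ix) flow.

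Next I would treat the lower bounds by a counting / linear-programming argument. Let $n_d$ be the number of items of degree exactly $d$. Double-counting incidences of low-degree items with small server sets, each item of degree $d\le k-1$ lies inside $\binom{m-d}{k-1-d}$ of the $(k-1)$-subsets $J$, while each such $J$ can host at most $k-1$ items; summing gives $\sum_{d\le k-1} n_d\binom{m-d}{k-1-d}\le (k-1)\binom{m}{k-1}$, together with $\sum_d n_d=n$. Minimizing $N=\sum_d d\,n_d$ subject to these linear constraints produces the stated values: one fills the most ``efficient'' low degree first, so the optimum places as many items as possible at degree $k-1$ and the remainder at degree $k$, yielding (ii) when $n\ge (k-1)\binom{m}{k-1}$ and, at the boundary $m=k$, its form $kn-k(k-1)$ in (i). When $n$ is smaller (regime (iii)) the budget has slack after putting all items at degree $k-1$, and this slack lets $\lfloor\frac{(k-1)\binom{m}{k-1}-n}{m-k+1}\rfloor$ items drop further to degree $k-2$, each saving one unit and explaining the floor correction. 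The matching upper bounds are explicit constructions realizing these extremal degree sequences---for (ii), assign $k-1$ items of neighborhood exactly $J$ to each $(k-1)$-subset $J$ and send the surplus to degree $k$; for (v), (vii), and the large-$n$ branches of (viii), the constructions are direct.

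The hard part, and where I expect the real obstacle, is the intermediate regime---parts (iv), (vi), the middle branches of (viii), and the sporadic (ix). Here the plain LP bound ceases to be tight, because its coefficients $\binom{m-d}{k-1-d}$ overcount the ways low-degree neighborhoods can overlap, and the extremal objects become genuine combinatorial designs. In (iv), for example, $A(m,4,k-3)$ is precisely the largest family of $(k-3)$-subsets of $[m]$ that pairwise intersect in at most $k-5$ points (a binary constant-weight code of distance four, equivalently a partial Steiner packing), and arranging the degree-$(k-3)$ items so that their neighborhoods form such a packing is exactly what makes both the sharpened lower bound and the matching construction succeed. The square-root terms in (vi) and (viii) arise similarly from near-optimal graph packings balancing the count of degree-two items against pairwise (edge-level) Hall constraints, and (ix) is a bespoke construction tied to the projective plane of order $q$. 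Thus the uniform skeleton---Hall characterization, LP lower bound, extremal degree-sequence construction---delivers (i), (ii), (iii), (v), and (vii) cleanly, while each remaining part requires its own design- or coding-theoretic packing argument, which is why they were obtained in separate works and constitute the true difficulty.
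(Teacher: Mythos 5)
First, a point of context: the paper never proves this theorem at all. It is stated explicitly as a compilation of known values of $N(n,k,m)$, with the reader referred to \cite{BRR12,BKMS10,BT11A,BT11B,PSW09,SG13}, so there is no internal proof to compare yours against, and your stance (defer to the cited works, then exhibit the common machinery) is consonant with what the paper does. Moreover, the skeleton you describe is the right one, and it is precisely the machinery the paper itself generalizes to MCBCs: your Hall-type characterization is Theorem~\ref{hallc} (the $r=1$ case of Theorem~\ref{mhallc}); your double-counting inequality $\sum_{d\le k-1} n_d\binom{m-d}{k-1-d}\le (k-1)\binom{m}{k-1}$ is Lemma~\ref{lem:bound} with $r=1$; and the optimization-plus-replication argument is how the paper (following \cite{BRR12,BT11A,PSW09}) obtains the $r$-analogues of (i), (ii), (iii) in Theorem~\ref{bound} and Corollaries~\ref{b.largen}, \ref{cor:r=k-2}, \ref{k=m}. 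Your reductions are also correct where you make them: (vii) is the case $k=3$ of (ii)--(iii), the last two branches of (viii) are the case $k=4$, and (iv) does hinge on distance-$4$ constant-weight codes exactly as in Construction~\ref{cons2}.

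The concrete flaw is your claim that part (v) is delivered ``cleanly'' by the LP skeleton. It is not: the lower bound $N(m+1,k,m)\ge m+k$ cannot be extracted from averaged counting constraints. Consider the degree profile $n_1=m-1$, $n_2=2$, with $N=m+3$. At every level $j\le k-1$ the averaged constraint $\sum_d n_d\binom{m-d}{j-d}\le j\binom{m}{j}$ reduces, using $j\binom{m}{j}=m\binom{m-1}{j-1}$, to $2\binom{m-2}{j-2}\le\binom{m-1}{j-1}$, i.e.\ to $2(j-1)\le m-1$; hence for every $k\ge 4$ and $m\ge 2k-3$ this profile is feasible for your LP and your lower bound is at most $m+3<m+k$. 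Yet the profile is not realizable as a CBC: the $m-1$ degree-one items must occupy $m-1$ distinct servers, leaving at most one item-free server, and a short case analysis shows that in any placement of the two degree-two items some three servers contain at least four items, violating the (unaveraged) Hall condition at level $3\le k-1$. This is exactly the overlap obstruction that averaging washes out --- the phenomenon you correctly attribute to (iv), (vi), (viii), (ix) --- so (v) belongs on your ``hard'' list and is proved in \cite{PSW09} by a genuinely different, non-LP argument. A smaller quibble of the same kind: deriving (i) as the $m=k$ boundary case of (ii) only covers $n\ge k(k-1)$; for $k\le n<k(k-1)$ one needs the counting bound applied directly at $m=k$ (where $\binom{k-i}{k-1-i}=k-i$ makes it exact) together with the singleton-plus-full-blocks construction, as in Corollary~\ref{k=m}.
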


\subsection{Our Contributions}

From the definition of MCBCs, one can observe that $r\leq k\leq tm$ and $n\leq N$. If $m\geq nr$, the trivial construction where each server stores a single item is optimal. Therefore, we only consider the case $m<nr$.

In this paper, we study the properties of MCBCs, and give a necessary and sufficient condition for the existence of MCBCs. We give the following bounds on the value of $N(n,k,m;r)$.

\begin{theorem}\label{th:bounds}
\begin{enumerate}
$~$
\item[(i)] $N(n,k,m;r)\geq rn$.
\item[(ii)] $N(n,k,m;r)\geq N(n,k,m;i)$ for $i\in [r-1]$.
\item[(iii)] $\frac{1}{r}N(rn,k,m)\leq N(n,k,m;r)\leq N(rn,k,m)$.
\item[(iv)] $N(n,k,m;r)\leq rN(n,\left\lceil\frac{k}{r}\right\rceil,\left\lfloor\frac{m}{r}\right\rfloor)$.
\item[(v)] Let $r\leq k-1$. For any $c\in[r,k-1]$,
$N(n,k,m;r)\geq nc-\left\lfloor\frac{k-c}{m-k+1}\left[\frac{\left\lfloor\frac{k-1}{r}\right\rfloor{m\choose k-1}}{{m-c\choose k-1-c}}-n\right]\right\rfloor$.
\end{enumerate}
\end{theorem}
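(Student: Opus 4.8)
The plan is to combine the Hall-type existence criterion for $t=1$ with a double-counting argument over $(k-1)$-subsets of servers, and then to extract the stated inequality by a convexity comparison. First I would record the feasibility criterion: writing $\Gamma(S)=\{j\in[m]:C_j\cap S\neq\emptyset\}$ for the set of servers meeting an item-set $S\subseteq[n]$, the family $\cC$ is an $(n,N,k,m;r)$-MCBC if and only if $|\Gamma(S)|\ge\min(r|S|,k)$ for every $S\subseteq[n]$; this is the deficiency version of Hall's theorem applied to the worst request, which piles multiplicity $r$ on the items of $S$. Since $r\le k-1$, each single item already satisfies $|\Gamma(\{i\})|\ge r$, so every item lies in at least $r$ servers.

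The combinatorial core is the following observation. Fix a $(k-1)$-subset $Q\subseteq[m]$ and let $S_Q$ be the set of items whose entire server-neighborhood lies inside $Q$. Then $\Gamma(S_Q)\subseteq Q$ forces $|\Gamma(S_Q)|\le k-1<k$, so the criterion gives $r|S_Q|=\min(r|S_Q|,k)\le|\Gamma(S_Q)|\le k-1$, whence $|S_Q|\le\lfloor(k-1)/r\rfloor$. I would then double-count pairs $(i,Q)$ in which $Q$ is a $(k-1)$-subset and the neighborhood of item $i$ is contained in $Q$. Letting $n_d$ be the number of items lying in exactly $d$ servers, an item of degree $d\le k-1$ is counted by $\binom{m-d}{k-1-d}$ subsets $Q$, while each $Q$ contributes at most $\lfloor(k-1)/r\rfloor$. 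This yields the master inequality $\sum_{d\le k-1}n_d\binom{m-d}{k-1-d}\le\lfloor(k-1)/r\rfloor\binom{m}{k-1}$.

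It remains to convert this into a lower bound on $N=\sum_d d\,n_d$. The target $cn-N\le\frac{k-c}{m-k+1}\big(M-n\big)$, with $M:=\lfloor(k-1)/r\rfloor\binom{m}{k-1}/\binom{m-c}{k-1-c}$, rewrites after clearing denominators as $\sum_d g(d)\,n_d\le\lfloor(k-1)/r\rfloor\binom{m}{k-1}$, where $g(d)=\big[(c-d)\tfrac{m-k+1}{k-c}+1\big]\binom{m-c}{k-1-c}$ is affine in $d$. The plan is to compare $g$ with $\phi(d):=\binom{m-d}{k-1-d}=\binom{m-d}{m-k+1}$ coefficientwise and invoke the master inequality. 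One checks that $\phi$ is a convex function of the integer $d$ (it is $\binom{u}{m-k+1}$ with $u=m-d$), and that the affine $g$ is exactly the secant of $\phi$ through the consecutive points $d=c-1$ and $d=c$, where $g$ and $\phi$ agree. Convexity then forces $g(d)\le\phi(d)$ at every integer $d$; and since $\phi(d)=0$ while $g(d)\le 0$ for $d\ge k$, the inequality $g\le\phi$ holds for all $d$. Summing against $n_d\ge 0$ gives $\sum_d g(d)n_d\le\sum_d\phi(d)n_d\le\lfloor(k-1)/r\rfloor\binom{m}{k-1}$, and because $cn-N$ is an integer one may apply the floor to recover part~(v).

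The main obstacle I anticipate lies in this final conversion rather than in the counting: items of large degree ($d\ge k$) never appear in the master inequality yet do contribute to both $N$ and $n$, so a naive coefficientwise comparison breaks down. The convexity/secant reformulation is what tames this, guaranteeing that the affine weights $g(d)$ never exceed the combinatorial weights $\phi(d)$ simultaneously for the deficient items ($d<c$), the neutral item degree ($d=c$), and the high-degree items. The two equality points at $d=c-1,c$ are precisely what make the constant $\frac{k-c}{m-k+1}$ and the ratio defining $M$ emerge exactly as stated.
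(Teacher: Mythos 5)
Your argument, as far as it goes, is sound: the necessity direction of the multiset Hall-type criterion you invoke is exactly the paper's Theorem~\ref{mhallc} ($\Rightarrow$ direction), your double count over $(k-1)$-subsets reproduces the paper's Lemma~\ref{lem:bound} ($\sum_{d\le k-1} n_d\binom{m-d}{k-1-d}\le \lfloor\frac{k-1}{r}\rfloor\binom{m}{k-1}$), and your conversion of that master inequality into part~(v) is correct: $g$ is affine, it agrees with $\phi(d)=\binom{m-d}{m-k+1}$ at the consecutive integers $d=c-1,c$, the second difference $\binom{m-d-1}{m-k-1}\ge 0$ gives integer convexity of $\phi$ (with the zero-extension for $d\ge k$), hence $g\le\phi$ at every integer, and the floor is justified by integrality of $cn-N$. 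In fact this is \emph{more} self-contained than the paper, which proves only the master inequality and then cites Lemma~3.2 of the CBC paper of Bhattacharya--Ruj--Roy for the remaining manipulation; your secant/convexity argument is a clean replacement for that citation, and your remark about items of degree $\ge k$ (absent from the master inequality yet present in $N$ and $n$) identifies precisely the point that a naive comparison misses. Part~(i) also follows from your opening observation that every item must lie in at least $r$ servers.

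However, the statement you were asked to prove has five parts, and your proposal addresses only (i) and (v). Parts (ii), (iii) and (iv) are entirely missing, and they cannot fall out of your machinery: (iii) and (iv) are \emph{upper} bounds on $N(n,k,m;r)$, which require constructions rather than counting arguments. In the paper, (ii) is definitional monotonicity (a code serving multiplicity $r$ serves any multiplicity $i\le r$); (iii) rests on a two-way correspondence between $(n,N,k,m;r)$-MCBCs and $(rn,rN,k,m)$-CBCs --- cloning each item into $r$ copies for the lower bound, and projecting the $rn$ items of a CBC onto residues mod $n$ for the upper bound (the paper's Claim~\ref{cl1} and Lemma~\ref{recursive_bound}(i)); and (iv) takes an $(n,N,\lceil k/r\rceil,\lfloor m/r\rfloor)$-CBC and replicates each of its servers $r$ times, verifying the multiset Hall condition for the resulting system (Lemma~\ref{recursive_bound}(ii)). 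Each of these needs its own argument, so as a proof of the full theorem your proposal has a genuine gap, even though the two parts you do treat are handled correctly.
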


We also provide several constructions of $(n,N,k,m;r)$-MCBC and determine the exact value of $N(n,k,m;r)$ for some specific parameters.

\begin{theorem}\label{th:constructions}
\begin{enumerate}
$~$
\item[(i)] If $n\geq \left\lfloor\frac{k-1}{r}\right\rfloor{m\choose k-1}$, then $N(n,k,m;r)=kn-\left\lfloor\frac{k-1}{r}\right\rfloor{m\choose k-1}$.
\item[(ii)] $N(n,k,m;k)=kn$, $N(n,k,m;k-1)=\begin{cases}
kn-{m\choose k-1} & \text{if $n\geq {m\choose k-1}$}, \\
(k-1)n & \text{if $n< {m\choose k-1}$}.
\end{cases}$
\item[(iii)] If $\left\lfloor\frac{k-1}{r}\right\rfloor{m\choose k-1}-(m-k+1)A(m,4,k-2)\leq n\leq \left\lfloor\frac{k-1}{r}\right\rfloor{m\choose k-1}$ and $r\leq k-2$, then $N(n,k,m;r)=(k-1)n-\left\lfloor\frac{\left\lfloor\frac{k-1}{r}\right\rfloor{m\choose k-1}-n}{m-k+1}\right\rfloor$.
\item[(iv)] $N(n,k,k;r)=kn-\left\lfloor\frac{k-1}{r}\right\rfloor k$ if $r\mid k$, $n\geq \frac{k}{r}$ or $r\nmid k$, $n\geq \lfloor\frac{k}{r}\rfloor+r$.
\item[(v)] For any prime power $q$, $N(q^2+q,k,q^2;r)\leq q^3+q^2$, where $(k,r)$ satisfies $\lfloor\frac{q}{2}\rfloor+1\leq r\leq q$, $k\leq (q-r+1)(2r-1)$ or $r=1$, $k\leq q^2$. Especially, when $(k,r)\in\{(q^2,1),(2q-1,q)\}$, $N(q^2+q,k,q^2;r)=q^3+q^2$.
\end{enumerate}
\end{theorem}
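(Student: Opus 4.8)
The plan is to base every part on a single criterion that I obtain directly from Hall's theorem, together with one structural lemma it yields. Write $S_i\subseteq[m]$ for the set of servers holding item $i$ and $x_i=|S_i|$, so that $N=\sum_i x_i$ and $kn-N=\sum_i(k-x_i)$. Since $t=1$, serving a request that asks item $i$ with multiplicity $r_i$ (with $0\le r_i\le r$ and $\sum_i r_i=k$) amounts to choosing $r_i$ distinct servers for each $i$, all distinct across items; by the defect form of Hall's theorem this is possible for every admissible request if and only if $|\bigcup_{i\in T}S_i|\ge\min(r|T|,k)$ for all $T\subseteq[n]$, the point being that the maximum request mass that can be placed on $T$ equals $\min(r|T|,k)$. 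The one structural consequence I will use repeatedly is: if $W\subseteq[m]$ with $|W|=w\le k-1$ and $I_W=\{i:S_i\subseteq W\}$, then $\min(r|I_W|,k)\le|\bigcup_{i\in I_W}S_i|\le w<k$ forces $r|I_W|\le w$, i.e. $|I_W|\le\lfloor w/r\rfloor$; in particular every $(k-1)$-block stores at most $s:=\lfloor(k-1)/r\rfloor$ items. The bound $N\ge rn$ of Theorem~\ref{th:bounds}(i) is just the case $T=\{i\}$.

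For part (i) the lower bound comes from double counting over the $(k-1)$-blocks. Summing the structural bound gives $\sum_{i:x_i\le k-1}\binom{m-x_i}{k-1-x_i}=\sum_{|W|=k-1}|I_W|\le s\binom{m}{k-1}$, and since $m\ge k$ one has the elementary inequality $k-x_i\le\binom{m-x_i}{k-1-x_i}$ for every $x_i\le k-1$ (equality at $x_i=k-1$); as the terms with $x_i\ge k$ are non-positive, $kn-N=\sum_i(k-x_i)\le s\binom{m}{k-1}$. For the matching construction I make both inequalities tight: to each $(k-1)$-subset $W$ I assign $s$ distinct items, each stored in exactly the servers of $W$, and every remaining item is stored in an arbitrary $k$-subset, which the hypothesis $n\ge s\binom{m}{k-1}$ makes possible. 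The criterion is then immediate: if $T$ spans at least $k$ servers (in particular if it contains a degree-$k$ item) the inequality holds trivially, while if $T$ spans at most $k-1$ servers then all its items share one common $(k-1)$-block, so $|T|\le s$, $r|T|\le rs\le k-1$, and $|\bigcup_{i\in T}S_i|=k-1\ge r|T|=\min(r|T|,k)$. Parts (ii) and (iv) are then read off from (i): for $r=k$ we have $s=0$ and $N=kn$; for $r=k-1$ we have $s=1$, giving the branch $n\ge\binom{m}{k-1}$, while for $n<\binom{m}{k-1}$ the bound $N\ge(k-1)n$ of Theorem~\ref{th:bounds}(i) is met by assigning each item its own distinct $(k-1)$-subset (two distinct such subsets cover $\ge k$ servers). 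Part (iv) is the case $m=k$, where $\binom{m}{k-1}=k$ makes the formula coincide with (i); a direct construction on $k$ servers, partitioning them into disjoint $r$-blocks carrying one item each and adjusting for the remainder when $r\nmid k$, extends optimality down to the stated thresholds $n\ge k/r$ and $n\ge\lfloor k/r\rfloor+r$.

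Part (iii) is the delicate regime just below the threshold $s\binom{m}{k-1}$, where too few items remain to fill every $(k-1)$-block. Its lower bound is exactly Theorem~\ref{th:bounds}(v) specialised to $c=k-1$: there $\binom{m-c}{k-1-c}=1$, and the bound reads $(k-1)n-\lfloor(s\binom{m}{k-1}-n)/(m-k+1)\rfloor$, which is the claimed value. The remaining work, and the main obstacle, is the matching construction. Writing $D=s\binom{m}{k-1}-n$ for the deficiency, I must save an extra $\lfloor D/(m-k+1)\rfloor$ beyond the baseline $(k-1)n$ by demoting some items from degree $k-1$ to degree $k-2$. A degree-$(k-2)$ item occupies a $(k-2)$-block, i.e. a weight-$(k-2)$ indicator vector, and to keep the criterion intact one must choose these blocks so that any two of them, together with the ambient $(k-1)$-blocks, still cover enough servers. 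This forces the chosen weight-$(k-2)$ vectors to have pairwise Hamming distance $\ge 4$, which is precisely why $A(m,4,k-2)$ enters: it bounds how many low-degree items can be packed, each absorbing $m-k+1$ units of deficiency, and hence pins down the reach $n\ge s\binom{m}{k-1}-(m-k+1)A(m,4,k-2)$ and the residue condition modulo $m-k+1$ under which the floor is met exactly. I expect balancing the saved amount against this distance-$4$ packing to be the hardest step.

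Finally, part (v) is a stand-alone geometric construction. Identify the $q^2$ servers with the points of the affine plane $\mathrm{AG}(2,q)$ and the $n=q^2+q$ items with its $q^2+q$ lines, storing each item in the $q$ points of its line; then $x_i=q$ for all $i$ and $N=q(q^2+q)=q^3+q^2$. For $T$ a set of lines, $\bigcup_{i\in T}S_i$ is the point set they cover, so the criterion $|\bigcup_{i\in T}S_i|\ge\min(r|T|,k)$ becomes a covering estimate for unions of lines, and the stated ranges $\lfloor q/2\rfloor+1\le r\le q$ and $k\le(q-r+1)(2r-1)$ are exactly what make every such union large enough (the worst configurations being collections of concurrent lines). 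The two equality cases need no further work: for $(k,r)=(2q-1,q)$ the criterion forces $x_i\ge r=q$, so $N\ge rn=q^3+q^2$ by Theorem~\ref{th:bounds}(i); and $(k,r)=(q^2,1)$ is the CBC value $N(q^2+q,q^2,q^2)=q^3+q^2$ of Theorem~\ref{CBC}(i). In both the construction meets the bound.
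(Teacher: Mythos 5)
Your Hall-type criterion is exactly the paper's multiset Hall's condition (Theorem~\ref{mhallc}), only derived directly from the defect version of Hall's theorem instead of via the paper's reduction to an $rn$-item CBC; your structural lemma is the key counting step inside Lemma~\ref{lem:bound}. On this basis your parts (i) and (ii) are correct and essentially identical to the paper's argument: the double count over $(k-1)$-subsets, the inequality $k-x_i\leq\binom{m-x_i}{k-1-x_i}$, and the replication construction are Lemma~\ref{lem:bound}, Lemma~\ref{ineq}, Construction~\ref{c.largen} and Corollaries~\ref{b.largen} and~\ref{r=k-1}. The gaps are in parts (iii)--(v), and in each case the step you defer or merely assert is where the actual content lies.

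Part (iii) is genuinely incomplete, as you yourself admit. Pairwise distance $4$ among the $(k-2)$-blocks is not sufficient: if a $(k-2)$-block $S$ coexists with all $\lfloor\frac{k-1}{r}\rfloor$ copies of even one of its $m-k+2$ supersets $W$ of size $k-1$, then $W$ contains $\lfloor\frac{k-1}{r}\rfloor+1$ blocks of the dual system and your own structural lemma is violated, since $\min\{(\lfloor\frac{k-1}{r}\rfloor+1)r,k\}=k>k-1$. The missing mechanism (the paper's Construction~\ref{cons2}) is that each added $(k-2)$-block must be accompanied by the deletion of one copy of each of its $m-k+2$ supersets; that deletion is precisely what makes each codeword absorb $m-k+1$ units of deficiency (one block added, $m-k+2$ removed), and the second role of distance $4$ is to guarantee that the deleted supersets of distinct codewords never coincide. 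With this mechanism, plus discarding surplus $(k-1)$-blocks, the floor in the lower bound is attained for \emph{every} $n$ in the stated range (Corollary~\ref{cor:r=k-2}); your ``residue condition modulo $m-k+1$'' is spurious --- it belongs to the CBC result, Theorem~\ref{CBC}(iv), not to this statement.

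Parts (iv) and (v) are also underproved. For (iv), the lower bound indeed follows from your double count with $m=k$, but the construction is the issue: when $r\nmid k$, disjoint $r$-blocks together with copies of $[k]$ give only $N=kn-\lfloor\frac{k}{r}\rfloor(k-r)$, which misses the target $kn-\lfloor\frac{k-1}{r}\rfloor k$ by $r\lfloor\frac{k}{r}\rfloor$; the paper's Construction~\ref{cons.m=k} must add $r$ further blocks of size $k-\lfloor\frac{k}{r}\rfloor$, each omitting exactly one point from every $r$-block, and verifying the Hall condition for these is a multi-case argument --- none of which is captured by ``adjusting for the remainder.'' For (v), the covering estimate you wave at is the whole proof: since two lines meet in at most one point, $h$ lines cover at least $hq-\binom{h}{2}$ points, and one must check $hq-\binom{h}{2}\geq\min\{hr,k\}$ for all $h\leq\lceil\frac{k}{r}\rceil$, where the hypothesis $r\geq\lfloor\frac{q}{2}\rfloor+1$ enters exactly to ensure $\lceil\frac{k}{r}\rceil\leq 2(q-r)+2\leq q+1$ so that the estimate applies (Theorem~\ref{c.steiner}). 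Your parenthetical identification of the extremal case is also backwards: for $h\geq 3$, concurrent lines cover $hq-(h-1)$ points, which is \emph{more} than the $hq-\binom{h}{2}$ covered by lines in general position, so the worst configurations are the latter, not pencils of concurrent lines.
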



For a regular $(n,N,k,m;k)$-MCBC, every item has to be stored in at least $k$ different servers and so $\mu(n,k,m;k)\geq kn/m$. Our contribution in this part is finding a necessary and sufficient condition for equality in the last inequality. This result is summarized in the following theorem.

\begin{theorem}\label{th:constructionsreg}
$\mu(n,k,m;k)=\frac{kn}{m}$ if and only if $n=c\cdot\frac{m}{\gcd{(m,k)}}$ for some integer $c\geq 0$.
\end{theorem}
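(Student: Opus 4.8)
The plan is to show that $\mu(n,k,m;k)=kn/m$ is governed purely by whether $kn/m$ is an integer, and that this integrality is in turn equivalent to the stated divisibility condition. First I would record the lower bound $\mu(n,k,m;k)\ge kn/m$: since $r=k$, a legal request is $k$ copies of a single item, and because $t=1$ each server supplies at most one of them, every item must occur in at least $k$ servers; summing over all $n$ items gives $N\ge kn$ and hence $\mu=N/m\ge kn/m$. As $\mu$ is an integer, equality forces $kn/m\in\Z$, i.e.\ $m\mid kn$. A one-line $\gcd$ argument (writing $d=\gcd(m,k)$, $m=dm'$, $k=dk'$ with $\gcd(m',k')=1$, so that $m\mid kn\iff m'\mid n$) then shows $m\mid kn$ is equivalent to $n=c\cdot\frac{m}{\gcd(m,k)}$. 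This settles the ``only if'' direction and, when divisibility fails, shows $\mu\ge\lceil kn/m\rceil>kn/m$.

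For ``if'', assume $n=c\cdot\frac{m}{\gcd(m,k)}$, so that $\mu:=kn/m$ is a nonnegative integer. I would exhibit a regular code meeting the bound by a cyclic block construction: index the servers by $\Z_m$ and store item $i$ in the $k$ consecutive servers $ik,ik+1,\dots,ik+k-1 \pmod m$. Since $k\le tm=m$, these $k$ residues are distinct, so each item lies in exactly $k$ servers. The $n$ blocks start at $0,k,2k,\dots$ and so tile $[0,nk)$ with no gaps or overlaps; reducing modulo $m$ wraps this interval around the $m$-cycle exactly $nk/m=\mu$ times, so each server receives exactly $\mu$ items and the code is regular with $N=kn$, matching the lower bound.

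It then remains to verify the MCBC property. Writing a request as distinct items with multiplicities $a_1,\dots,a_s$ summing to $k$, a valid $t=1$ retrieval is precisely an assignment of the $k$ requested copies to distinct servers holding them, so by Hall's theorem it suffices that every sub-family $T$ of requested items has server-union of size at least $\sum_{\ell\in T}a_\ell$. Because each item sits in exactly $k$ servers, any nonempty such union has size at least $k$, while the corresponding multiplicity sum is at most $\sum_\ell a_\ell=k$; Hall's condition therefore holds automatically for every request, regardless of which biregular incidence structure we chose. Hence the construction is a valid $(n,kn,k,m;k)$-MCBC attaining $\mu=kn/m$, completing the ``if'' direction.

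The only real work is the construction: arranging an incidence structure that is simultaneously $k$-regular on the item side and $\mu$-regular on the server side. The cyclic tiling accomplishes both at once, and I expect the main point to check carefully is that the blocks cover every server the same number of times, which is exactly where the hypothesis $m\mid nk$ (equivalently the $\gcd$ divisibility) is used, with $k\le m$ ensuring the $k$ servers of each item are distinct. The retrieval condition, by contrast, is essentially free once every item has degree exactly $k$, so I do not anticipate any genuine obstacle there.
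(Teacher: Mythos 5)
Your proof is correct and takes essentially the same approach as the paper: the ``only if'' direction is the integrality of $\mu$ combined with the lower bound $\mu(n,k,m;k)\geq kn/m$, and your cyclic tiling is precisely the paper's Construction~\ref{reg_cons} (the paper builds the base case $n=\frac{m}{\gcd(m,k)}$ and takes $c$ copies, which produces the same multiset of blocks as your direct tiling for general $n$). Your Hall's-theorem verification of the retrieval property is just the paper's multiset Hall's condition specialized to $r=k$, where it reduces to every item being stored in at least $k$ servers.
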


The rest of the paper is organized as follows. In Section~\ref{sec:hallcondition}, we give a necessary and sufficient condition for the existence of MCBCs. In Sections~\ref{sec:bound} and \ref{sec:construction}, we give several bounds and constructions for MCBCs, and establish the results of $N(n,k,m;r)$ in Theorems~\ref{th:bounds} and \ref{th:constructions}. In Section~\ref{sec:regular_mcbc}, we analyse regular MCBCs, and determine the value of $\mu(n,k,m;k)$ in Theorem~\ref{th:constructionsreg}.

\section{Set Systems and the Multiset Hall's Condition}\label{sec:hallcondition}

A {\em set system} is a pair $(V,\cC)$, where $V$ is a finite set of {\em points} and $\cC$ is a collection of subsets of $V$ (called {\em blocks}). Given a set system $(V,\cC)$ with a points set $V=\{v_1,v_2,\dots,v_n\}$ and a blocks set $\cC=\{C_1,C_2,\dots,C_m\}$, its {\em incidence matrix} 
is an $m\times n$ matrix $M$, given by
$$M_{i,j}=\begin{cases}
1 & \text{if $v_j\in C_i$}, \\
0 & \text{if $v_j\not\in C_i$}.
\end{cases}$$
If $M$ is the incidence matrix of the set system $(V,\cC)$, then the set system having incidence matrix $M^\top$ is called the {\em dual set system} of $(V,\cC)$.

Let $\cC=\{C_1,C_2,\dots,C_m\}$ be an $(n,N,k,m;r)$-MCBC. Similarly to the study of CBCs, by setting $V=[n]$, we consider the set system $(V,\cC)$ of the MCBC. In addition, we denote the set system $(X,\cB)$ which is given by $X=[m]$ and $\cB=\{B_1,B_2,\dots,B_n\}$ where for each $i\in[n]$, $B_i\subseteq X$ consists of the servers that store the $i$-th item. Then, it is readily verified that $(X,\cB)$ is the dual set system of $(V,\cC)$. We note that a set system $(V,\cC)$ of this form or its dual set system $(X,\cB)$ uniquely determines an MCBC and thus in the rest of the paper we will usually refer to an MCBC by its set system or its dual set system.

\begin{example}
\label{e.ap4} The following is a $(20,80,16,16)$-CBC given in \cite{SG13} based on an {\em affine plane} of order $4$. Here, $V=[20]$, each column contains the indices of items stored in a server $C_i\in\cC$ and also forms a block of the set system $(V,\cC)$.
$$\begin{array}{|c|c|c|c|c|c|c|c|c|c|c|c|c|c|c|c|}
\hline
1 & 2 & 3 & 4 & 1 & 2 & 3 & 4 & 1 & 2 & 3 & 4 & 1 & 2 & 3 & 4 \\
5 & 6 & 7 & 8 & 6 & 5 & 8 & 7 & 7 & 8 & 5 & 6 & 8 & 7 & 6 & 5 \\
9 & 10 & 11 & 12 & 12 & 11 & 10 & 9 & 10 & 9 & 12 & 11 & 11 & 12 & 9 & 10 \\
13 & 14 & 15 & 16 & 15 & 16 & 13 & 14 & 16 & 15 & 14 & 13 & 14 & 13 & 16 & 15 \\
17 & 17 & 17 & 17 & 18 & 18 & 18 & 18 & 19 & 19 & 19 & 19 & 20 & 20 & 20 & 20 \\
\hline
\end{array}$$

The incidence matrix of the CBC given above is as follows, where the indices of nonzero entries in the $i$-th row, $i\in[16]$, correspond to the indices of items stored in the $i$-th server $C_i$.
\begin{equation*}
\left(
\begin{array}{ccccc}
1~0~0~0~1~0~0~0~1~0~0~0~1~0~0~0~1~0~0~0 \\
0~1~0~0~0~1~0~0~0~1~0~0~0~1~0~0~1~0~0~0 \\
0~0~1~0~0~0~1~0~0~0~1~0~0~0~1~0~1~0~0~0 \\
0~0~0~1~0~0~0~1~0~0~0~1~0~0~0~1~1~0~0~0 \\
1~0~0~0~0~1~0~0~0~0~0~1~0~0~1~0~0~1~0~0 \\
0~1~0~0~1~0~0~0~0~0~1~0~0~0~0~1~0~1~0~0 \\
0~0~1~0~0~0~0~1~0~1~0~0~1~0~0~0~0~1~0~0 \\
0~0~0~1~0~0~1~0~1~0~0~0~0~1~0~0~0~1~0~0 \\
1~0~0~0~0~0~1~0~0~1~0~0~0~0~0~1~0~0~1~0 \\
0~1~0~0~0~0~0~1~1~0~0~0~0~0~1~0~0~0~1~0 \\
0~0~1~0~1~0~0~0~0~0~0~1~0~1~0~0~0~0~1~0 \\
0~0~0~1~0~1~0~0~0~0~1~0~1~0~0~0~0~0~1~0 \\
1~0~0~0~0~0~0~1~0~0~1~0~0~1~0~0~0~0~0~1 \\
0~1~0~0~0~0~1~0~0~0~0~1~1~0~0~0~0~0~0~1 \\
0~0~1~0~0~1~0~0~1~0~0~0~0~0~0~1~0~0~0~1 \\
0~0~0~1~1~0~0~0~0~1~0~0~0~0~1~0~0~0~0~1 \\
\end{array}
\right)
\end{equation*}

Here, the dual set system is $X=[16]$ and $\cB=\{\{1,5,9,13\},\{2,6,10,14\},$ $\{3,7,11,15\},\{4,8,$ $12,16\},\{1,6,11,16\},\{2,5,12,15\},\{3,8,9,14\},\{4,7,10,13\},\{1,8,10,15\},\{2,7,9,16\},\{3,6,12,$ $13\},\{4,5,11,14\},\{1,7,12,14\},\{2,8,11,13\},\{3,5,10,16\},\{4,6,9,15\},\{1,2,3,4\},\{5,6,7,8\},$ $\{9,10,11,12\},\{13,14,15,16\}\}$.
\end{example}

In the rest of this section we let $(V,\cC)$ with $V=[n]$ and $\cC=\{C_1,C_2,\dots,C_m\}$ be a set system, and $(X,\cB)$ with $X=[m]$ and $\cB=\{B_1,B_2,\dots,B_n\}$ be its dual set system. The following theorem states a necessary and sufficient condition on the dual set system to form a construction of CBCs.
\begin{theorem}[\cite{PSW09}]
\label{hallc} The set system $(V,\cC)$ is an $(n,N,k,m)$-CBC if and only if its dual set system $(X,\cB)$ satisfies the following {\em Hall's condition}:
\\\vbox{for all $h \in [k]$, and any $h$ distinct blocks $B_{i_1},B_{i_2},\dots,B_{i_h}\in\cB$, $|\cup_{j=1}^h B_{i_j}|\geq h$.}
\end{theorem}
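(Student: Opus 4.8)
The plan is to observe that, because $t=1$, retrieving a request of $k$ distinct items $\{i_1,\dots,i_k\}$ from the servers in $\cC$ is \emph{exactly} the task of assigning to each requested item $i_j$ a distinct server $s_j$ that stores it: each server may contribute at most one symbol, and we need $k$ items, so the $k$ servers used must be distinct. In the language of the dual system, where $B_i\subseteq[m]$ records the servers holding item $i$, this is precisely the existence of a \emph{system of distinct representatives} (SDR) for the sets $B_{i_1},\dots,B_{i_k}$. Once this dictionary is set up, both directions follow from Hall's marriage theorem, which states that a finite family of finite sets admits an SDR if and only if every subfamily of size $\ell$ has union of size at least $\ell$.

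For the forward direction I would assume $(V,\cC)$ is an $(n,N,k,m)$-CBC, fix $h\in[k]$ and $h$ distinct blocks $B_{i_1},\dots,B_{i_h}$. I would extend $\{i_1,\dots,i_h\}$ by $k-h$ further distinct indices of $[n]$ (possible since $n\geq k$) to a full $k$-request. By the CBC property this request is served by reading one symbol per server, so in particular $x_{i_1},\dots,x_{i_h}$ are read from $h$ \emph{distinct} servers, each of which stores the corresponding item and therefore lies in the appropriate $B_{i_j}$. These $h$ distinct servers all belong to $\cup_{j=1}^h B_{i_j}$, yielding $|\cup_{j=1}^h B_{i_j}|\geq h$, i.e.\ Hall's condition.

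For the reverse direction I would assume the dual system $(X,\cB)$ satisfies Hall's condition and take an arbitrary $k$-request $\{i_1,\dots,i_k\}$ of distinct items. For any $S\subseteq[k]$, the sets $\{B_{i_j}:j\in S\}$ are $|S|\leq k$ distinct blocks, so Hall's condition gives $|\cup_{j\in S}B_{i_j}|\geq|S|$. Hence the SDR hypothesis of Hall's theorem holds for the family $B_{i_1},\dots,B_{i_k}$, producing distinct servers $s_1,\dots,s_k$ with $s_j\in B_{i_j}$; reading $x_{i_j}$ from server $s_j$ (one symbol per server, each used at most once) retrieves the request, so $(V,\cC)$ is a CBC.

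The two applications of Hall's theorem are routine; the only points needing care are the duality bookkeeping—remembering that the blocks of $\cB$ are indexed by items while their \emph{elements} are servers, so that ``$h$ distinct blocks'' corresponds to $h$ requested items and a large union corresponds to many available servers—and, in the forward direction, the harmless padding of an $h$-subset to a full $k$-request, which is where the standing assumption $n\geq k$ enters. The genuine conceptual step, and the main thing to get right, is the identification of the $t=1$ retrieval requirement with the existence of an SDR; after that the result is essentially a restatement of Hall's marriage theorem.
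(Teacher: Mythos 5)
Your proof is correct. But note that the paper contains no proof of this statement to compare against: Theorem~\ref{hallc} is quoted from \cite{PSW09} and used as a black box --- indeed, the paper proves its generalization, the multiset Hall's condition of Theorem~\ref{mhallc}, not by adapting this argument to multiset requests but by reducing an $(n,N,k,m;r)$-MCBC to an auxiliary $(rn,rN,k,m)$-CBC (Claim~\ref{cl1}) and then invoking Theorem~\ref{hallc}. Your argument --- the dictionary between $t=1$ retrieval and systems of distinct representatives, the padding of $h$ requested items to a full $k$-request in the forward direction, and Hall's marriage theorem in the reverse direction --- is the standard proof, essentially that of the cited source. The only subtlety is one you implicitly resolved correctly: in the reverse direction the blocks $B_{i_1},\dots,B_{i_k}$ attached to the $k$ distinct requested items may coincide as subsets of $[m]$, so ``distinct blocks'' in the Hall's condition must be read as distinct members of the indexed family $\cB$ rather than distinct sets. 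That is the intended reading in this paper (its constructions, e.g.\ Construction~\ref{c.largen}, produce $\cB$ with repeated blocks), and under the other reading the reverse implication would actually be false (consider two items stored only on the same single server).
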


The Hall's condition was generalized in several ways, see e.g.~\cite{AK1990,M1967,MP1966}. For example, in \cite{BT12}, the authors explored the value of $N(n,k,m,t)$ for $t>1$ with a generalization of the Hall's condition, that is, the set system $(V,\cC)$ is an $(n,N,k,m,t)$-CBC if and only if its dual set system satisfies the {\em $(k,t)$-Hall's condition: for all $h\in [k]$, and any $h$ distinct blocks $B_{i_1},B_{i_2},\dots,B_{i_h}\in\cB$, $|\cup_{j=1}^h B_{i_j}|\geq h/t$.} In this paper, we present another generalization of the Hall's condition, named the {\em multiset Hall's condition}, and provide a necessary and sufficient condition for the construction of MCBCs.

\begin{theorem}
\label{mhallc}
The set system $(V,\cC)$ is an $(n,N,k,m;r)$-MCBC if and only if its dual set system $(X,\cB)$ satisfies the following {\em multiset Hall's condition}:
\\\vbox{for all $h\in\left[\lceil\frac{k}{r}\rceil\right]$, and any $h$ distinct blocks $B_{i_1},B_{i_2},\dots,B_{i_h}\in\cB$, $|\cup_{j=1}^h B_{i_j}|\geq \min\{hr,k\}$.}
\end{theorem}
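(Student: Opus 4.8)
The plan is to prove both directions by reducing to the classical Hall's condition (Theorem~\ref{hallc}) via a duplication argument. The key observation is that an MCBC with multiplicity bound $r$ behaves like a CBC in which each item may be requested up to $r$ times; intuitively, requesting item $i$ with multiplicity up to $r$ is the same as having $r$ independent ``virtual'' copies of item $i$, each requested at most once. I would make this precise by associating to the dual set system $(X,\cB)$ the blown-up system in which each block $B_i$ is replaced by $r$ identical copies $B_i^{(1)},\dots,B_i^{(r)}$ (equivalently, on the MCBC side, each item is split into $r$ virtual items stored on exactly the same servers). A multiset request of $k$ items with per-item multiplicity at most $r$ corresponds precisely to a set request of $k$ distinct virtual items. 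Thus the MCBC property for $(V,\cC)$ with parameter $r$ is equivalent to the ordinary CBC property for the blown-up system, and Theorem~\ref{hallc} can be applied to the latter.

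First I would set up this equivalence carefully and apply Theorem~\ref{hallc} to the blown-up dual system: it is a valid CBC if and only if for every $h\in[k]$ and every choice of $h$ distinct virtual blocks, their union has size at least $h$. The main work is then to translate this ``virtual'' Hall's condition back into a statement purely about the original blocks $B_1,\dots,B_n$, and to check it reduces exactly to the stated multiset Hall's condition. The crucial point is that the worst case (smallest union relative to number of blocks) in the blown-up system is achieved by taking all $r$ copies of each chosen original block together: if one selects $h$ virtual blocks spread across only $\ell$ distinct original blocks $B_{i_1},\dots,B_{i_\ell}$, then $h\le \ell r$ while the union equals $|\cup_{j=1}^\ell B_{i_j}|$, so the binding constraints come from choosing full groups of copies. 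This lets me restrict attention to selections of the form ``all $r$ copies of each of $\ell$ distinct original blocks,'' turning the requirement $|{\rm union}|\ge h$ into $|\cup_{j=1}^\ell B_{i_j}|\ge \min\{\ell r, k\}$, where the $\min$ with $k$ appears because the request size never exceeds $k$ (once $\ell r\ge k$ one only ever draws $k$ virtual blocks).

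Next I would pin down the range of $\ell$ that matters. Since a request has at most $k$ items, only unions of up to $\lceil k/r\rceil$ original blocks can be ``saturated'' by a full request; for $\ell>\lceil k/r\rceil$ the constraint $|\cup B_{i_j}|\ge k$ is already implied by the $\ell=\lceil k/r\rceil$ case (unions only grow with more blocks), so those conditions are redundant. This is exactly why the index $h$ in the statement ranges over $[\lceil k/r\rceil]$ rather than all of $[k]$. Conversely, for the necessity direction I would argue contrapositively: if the multiset Hall's condition fails for some $h\le\lceil k/r\rceil$ distinct blocks with $|\cup_{j=1}^h B_{i_j}|<\min\{hr,k\}$, then requesting $\min\{hr,k\}$ copies distributed among those $h$ items (each at most $r$ times) cannot be served, since the total reading capacity from the servers in $\cup_{j=1}^h B_{i_j}$ is only $|\cup_{j=1}^h B_{i_j}|<\min\{hr,k\}$ when $t=1$; here I would invoke a counting/pigeonhole argument analogous to the one underlying Theorem~\ref{hallc}.

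The step I expect to be the main obstacle is the reduction to full groups of copies, i.e.\ rigorously showing that among all selections of $h$ virtual blocks the most restrictive are those consisting of complete groups of $r$ copies, and correctly handling the interaction between the cap $\min\{hr,k\}$ and the blow-up. One must verify both that partial groups never produce a tighter constraint (a monotonicity argument: adding the remaining copies of an already-chosen original block increases $h$ without changing the union) and that the cap at $k$ is the right bookkeeping for requests whose size is limited to $k$ even when $hr>k$. I would also take care that, when $r\nmid k$, the topmost group ($h=\lceil k/r\rceil$) is only partially filled in an actual request, which is precisely what forces the $\min\{hr,k\}=k$ threshold there rather than $hr$; checking that this matches the Hall's condition obtained from the blow-up is the delicate combinatorial bookkeeping at the heart of the proof.
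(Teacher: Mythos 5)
Your proposal is correct and takes essentially the same approach as the paper: the paper's ($\Leftarrow$) direction also blows each item up into $r$ virtual copies stored on the same servers (its system $(U,\cF)$ with $U=[rn]$), proves the MCBC/CBC equivalence as a separate claim, and then verifies Hall's condition for the blown-up dual via exactly your monotonicity observation that the union depends only on the $a$ distinct original blocks involved while the number of chosen virtual blocks satisfies $h\leq\min\{ar,k\}$, with the $a>\lceil k/r\rceil$ case handled by restricting to $\lceil k/r\rceil$ of them. The only cosmetic difference is that the paper proves necessity directly by exhibiting the saturating multiset requests, whereas you state the same pigeonhole argument contrapositively.
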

\begin{proof}
($\Rightarrow$) Assume that $(V,\cC)$ is an $(n,N,k,m;r)$-MCBC, and let $i_1,i_2,\dots,i_h\in V$ for some $h\in\left[\lceil\frac{k}{r}\rceil\right]$ be the indices of some $h$ different items. Then, the set $\cup_{j\in[h]} B_{i_j}$ corresponds to the indices of all the servers that contain these items.

If $h\leq \lfloor\frac{k}{r}\rfloor$, let us consider the multiset request $\{i_1,\ldots,i_1,i_2,\ldots,i_2,\ldots, i_h,\ldots,i_h\}$ where each of the $h$ elements is requested $r$ times. Since it is possible to read from each server at most one item, the number of servers that contain these $h$ items has to be at least $hr$, that is $|\cup_{j\in[h]} B_{i_j}|\geq hr$. Similarly, if $h=\lceil\frac{k}{r}\rceil$, then we need $k$ servers for the multiset request of size $k$ on $i_1,i_2,\dots,i_h$ where each $i_j$, for $j\in[h]$, is requested at most $r$ times, and so $|\cup_{j\in[h]} B_{i_j}|\geq k$. Together we conclude that $|\cup_{j=1}^h B_{i_j}|\geq \min\{hr,k\}$.


($\Leftarrow$) We construct a new set system $(U,\cF)$ with $U=[rn]$ and $\cF=\{F_1,F_2,\dots,F_m\}$ where for $i\in [m]$, $F_i=\{c+jn:c\in C_i,j\in[0,r-1]\}$. Let $U^{(\ell)}=\{\ell+jn:j\in[0,r-1]\}$ for $\ell\in[n]$. We first show the following claim.
\begin{claim}\label{cl1}
$(V,\cC)$ is an $(n,N,k,m;r)$-MCBC if and only if $(U,\cF)$ is an $(rn,rN,k,m)$-CBC.
\end{claim}
\begin{proof}
Suppose that $(V,\cC)$ is an $(n,N,k,m;r)$-MCBC. For any request $P=\{i_1,i_2,\dots,i_k\}\subseteq[rn]$ of $(U,\cF)$, let $r_\ell$, for $\ell\in[n]$, be the number of elements requested from the set $U^{(\ell)}$, and note that $0\leq r_\ell\leq r$ and $\sum_{\ell=1}^nr_\ell=k$. Consider the multiset request $Q$ of $(V,\cC)$ where each $\ell\in [n]$ appears $r_\ell$ times. Since $(V,\cC)$ is an $(n,N,k,m;r)$-MCBC, $Q$ can be read by choosing subsets $D_j\subseteq C_j$, $|D_j|\leq 1$ for $j\in[m]$. Then $P$ can also be read from the set of servers $\{F_j:j\in[m],|D_j|=1\}$. Therefore, $(U,\cF)$ is an $(rn,rN,k,m)$-CBC.\footnote{We notice that this direction is not needed in the proof. But we still prove it here because we will use it in Lemma~\ref{bound} below.}

The reverse is similar. Suppose that $(U,\cF)$ is an $(rn,rN,k,m)$-CBC. For any multiset request $Q$ of $(V,\cC)$ where each $\ell\in [n]$ appears $r_\ell$ times, consider the request $P$ of $(U,\cF)$ which contain any $r_\ell$ distinct elements in $U^{(\ell)}$. Since $(U,\cF)$ is an $(rn,rN,k,m)$-CBC, $P$ can be read by taking $D_j\subseteq F_j$, $|D_j|\leq 1$ for $j\in[m]$. Then $Q$ can be read from the servers $\{C_j:j\in[m],|D_j|=1\}$. Therefore, $(V,\cC)$ is an $(n,N,k,m;r)$-MCBC.
\end{proof}

Let $(X =[m], \cG =\{G_1,\ldots,G_{nr}\})$, be the dual set system of  $(U,\cF)$, so $G_i$, for $i\in[rn]$, is the set of servers that contain the $i$-th item in $(U,\cF)$. We show that $(X,\cG)$ satisfies the Hall's condition. For any $i_1,i_2,\dots,i_h\in[rn]$, $h\in [k]$, let $r_\ell$ denote the number of elements in $U^{(\ell)}$ for $\ell\in[n]$. Then $\left|\cup_{j=1}^hG_{i_j}\right|=\left|\cup_{\ell:r_\ell\neq 0}B_\ell\right|$.

Let $a=|\{\ell:r_\ell\neq 0\}|$. By the multiset Hall's condition, when $a\leq \lceil\frac{k}{r}\rceil$, $\left|\cup_{\ell:r_\ell\neq 0}B_\ell\right|\geq \min\{ar,k\}$; when $\lceil\frac{k}{r}\rceil<a\leq k$, $$\left|\bigcup_{\ell:r_\ell\neq 0}B_\ell\right|\geq \min\{r\left\lceil\frac{k}{r}\right\rceil,k\}\geq k=\min\{ar,k\}.$$
Since $h=\sum_{\ell:r_\ell\neq 0}r_\ell\leq ar$ and $h\leq k$, we always have $|\cup_{j=1}^h G_{i_j}|\geq h$ for any $h\in[k]$, that is $(X,\cG)$ satisfies the Hall's condition. Hence, $(U,\cF)$ is an $(rn,rN,k,m)$-CBC by Theorem~\ref{hallc}, and by Claim~\ref{cl1} $(V,\cC)$ is an $(n,N,k,m;r)$-MCBC.
\end{proof}

Theorem~\ref{hallc} is a special case of Theorem~\ref{mhallc} for $r=1$. In the following, when constructing an MCBC, we always construct its dual set system $(X,\cB)$, and check if it satisfies the multiset Hall's condition from Theorem~\ref{mhallc}. By adding an asterisk, we let $(X,\cB)^\ast$ denote its dual set system $(V,\cC)$. The following properties are obvious.

\begin{remark}\label{basic}
\begin{enumerate}
$~$
\item[(i)] If there exists some $h_0<\lceil\frac{k}{r}\rceil$ such that for any $h_0$ blocks $B_{i_1},B_{i_2},\dots,B_{i_{h_0}}$, $|\cup_{j=1}^{h_0} B_{i_j}|\geq k$, then for any $h$ such that $h_0<h\leq \lceil\frac{k}{r}\rceil$, the multiset Hall's condition is also satisfied.
\item[(ii)] An $(n,N,k,m;r)$-MCBC is also an $(n,N,k',m;r)$-MCBC for any $k'\leq k$.
\end{enumerate}
\end{remark}

\begin{example}
By checking the multiset Hall's condition, it is possible to verify that Example~\ref{e.ap4} gives a construction of $(20,80,k,16;r)$-MCBC for any pair $(k,r)\in\{(16,1),(11,2),(10,3),(7,4)\}$. Especially, as will be shown in Construction~\ref{cons.ap} in the sequel, the code is optimal when $(k,r)\in\{(16,1),(7,4)\}$.
\end{example}

In the following sections, we will give several bounds and constructions of MCBCs.

\section{Bounds of MCBCs}\label{sec:bound}

In this section, we give several bounds of MCBCs, which provide the results stated in Theorem~\ref{th:bounds}.

\begin{lemma}
\label{trivial_bound}\begin{itemize}
$~$
\item[(i)] $N(n,k,m;r)\geq rn$.
\item[(ii)] $N(n,k,m;r)\geq N(n,k,m;i)$ for $i\in [r-1]$.
\item[(iii)] $N(n,k,m;k)=kn$.
\end{itemize}
\end{lemma}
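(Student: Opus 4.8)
The plan is to prove the three bounds of Lemma~\ref{trivial_bound} in order, each by a short counting or reduction argument. For part~(i), I would observe that since $r\le k$, the multiset request in which a single item $i\in[n]$ is requested $r$ times is a legal request. To serve it we must read that item from $r$ distinct servers (each server contributes at most $t=1$ copy), so item $i$ must be stored in at least $r$ servers. Summing this storage requirement over all $n$ items gives $N=\sum_{i\in[n]}|B_i|\ge rn$, where $B_i$ is the block of the dual set system recording which servers hold item~$i$. Equivalently, this is the $h=1$ instance of the multiset Hall's condition of Theorem~\ref{mhallc}, which forces $|B_i|\ge\min\{r,k\}=r$ for every singleton block.

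For part~(ii), the natural approach is a monotonicity-in-$r$ argument: any $(n,N,k,m;r)$-MCBC is automatically an $(n,N,k,m;i)$-MCBC for every $i\le r-1$. Indeed, every multiset request in which each item appears at most $i$ times is in particular a request in which each item appears at most $r$ times, so the decoding guaranteed by the larger code already handles the smaller-multiplicity requests. Hence the set of admissible codes for parameter $i$ contains those for parameter $r$, and taking the minimum over a larger family can only decrease (or keep) the optimum, giving $N(n,k,m;i)\le N(n,k,m;r)$. I would phrase this cleanly at the level of the code itself rather than the dual system, since the containment of request classes is immediate from the definition.

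For part~(iii), I would prove the two inequalities separately. The lower bound $N(n,k,m;k)\ge kn$ is the special case $r=k$ of part~(i). For the matching upper bound, I would exhibit an explicit construction meeting it: store each of the $n$ items in exactly $k$ (distinct) servers, which is possible because $k\le m$, giving total storage $kn$. To verify this is a valid $(n,N,k,m;k)$-MCBC I would check the multiset Hall's condition; here $\lceil k/r\rceil=\lceil k/k\rceil=1$, so the only case is $h=1$, requiring $|B_i|\ge\min\{k,k\}=k$ for each block, which holds by construction. Thus $N(n,k,m;k)\le kn$, and combined with the lower bound we get equality.

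I do not anticipate a serious obstacle, as all three parts reduce to the $h=1$ case of the multiset Hall's condition together with an immediate monotonicity observation. The only point requiring a little care is confirming that the single-item-repeated-$r$-times request is genuinely admissible, which holds precisely because $r\le k$ (noted in the paper's contributions subsection); this is what licenses the per-item lower bound $|B_i|\ge r$ underlying both~(i) and the lower half of~(iii).
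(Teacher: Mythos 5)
Your proposal is correct and takes essentially the same route as the paper: the paper proves (i) by the same observation that each item must be stored in at least $r$ servers, (ii) directly from the definition via the same monotonicity of admissible request classes, and (iii) by combining the $r=k$ case of (i) with the same trivial construction storing each item in $k$ arbitrary servers. Your version merely spells out the verification (via the $h=1$ case of the multiset Hall's condition of Theorem~\ref{mhallc}) that the paper leaves implicit.
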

\begin{proof}
(i) This inequality holds since each item has to be stored in at least $r$ servers. (ii) This inequality holds from the definition of MCBCs. (iii) By (i), $N(n,k,m;k)\geq kn$. The trivial construction where each item is stored in arbitrary $k$ servers gives an optimal code construction.
\end{proof}

\begin{lemma}
\label{recursive_bound}
\begin{itemize}
$~$
\item[(i)] $\frac{1}{r}N(nr,k,m)\leq N(n,k,m;r)\leq N(rn,k,m)$.
\item[(ii)] $N(n,k,m;r)\leq rN(n,\left\lceil\frac{k}{r}\right\rceil,\left\lfloor\frac{m}{r}\right\rfloor)$.
\end{itemize}
\end{lemma}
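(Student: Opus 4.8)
The plan is to prove each bound by an explicit transformation between codes, reusing the blow-up/fold correspondence developed in the proof of Theorem~\ref{mhallc}. The lower bound in (i) is almost free from Claim~\ref{cl1}: if $(V,\cC)$ is an optimal $(n, N(n,k,m;r), k, m; r)$-MCBC, then applying the forward direction of Claim~\ref{cl1} (the direction the footnote flags as needed here) produces its blow-up $(U,\cF)$, which is an $(rn, r\,N(n,k,m;r), k, m)$-CBC. By the definition of $N(rn,k,m)$ this yields $N(rn,k,m)\le r\,N(n,k,m;r)$, i.e.\ $\tfrac1r N(rn,k,m)\le N(n,k,m;r)$.

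For the upper bound in (i) I would run the correspondence backwards, but starting from an \emph{optimal} CBC rather than from a blow-up. Take an optimal $(rn, N(rn,k,m), k, m)$-CBC $(V'=[rn], \cC'=\{C_1',\dots,C_m'\})$ and fold the point set by $\phi(\ell+jn)=\ell$, which collapses each block $U^{(\ell)}$ to the single point $\ell$; set $C_i=\phi(C_i')\subseteq[n]$. Since $\phi$ only identifies points, $|C_i|\le|C_i'|$, so the folded system stores at most $N(rn,k,m)$ items. To see $(V=[n],\cC)$ is an MCBC, lift any multiset request (multiplicities $r_\ell\le r$, $\sum_\ell r_\ell\le k$) to a request of $\sum_\ell r_\ell$ \emph{distinct} indices of $[rn]$ by choosing $r_\ell$ representatives inside $U^{(\ell)}$ (possible as $|U^{(\ell)}|=r$); serve it in the CBC by a system of distinct representatives $D_i'\subseteq C_i'$ with $|D_i'|\le1$; then $D_i=\phi(D_i')$ serves the original request, because the $r_\ell$ distinct representatives of $U^{(\ell)}$ come from $r_\ell$ distinct servers and all map to $\ell$, contributing multiplicity $r_\ell$ to the multiset union.

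For part (ii), let $(V=[n],\cC_0)$ be an optimal $(n, N(n,\lceil k/r\rceil,\lfloor m/r\rfloor), \lceil k/r\rceil, \lfloor m/r\rfloor)$-CBC and place $r$ disjoint copies of it on $r\lfloor m/r\rfloor\le m$ of the $m$ servers, leaving the rest empty; the total storage is $r\,N(n,\lceil k/r\rceil,\lfloor m/r\rfloor)$. To serve a multiset request of size $\le k$ with each multiplicity $\le r$, I would distribute its $k$ occurrences among the $r$ copies so that each copy receives a \emph{set} of at most $\lceil k/r\rceil$ distinct items; each copy, being a CBC for batches of size $\lceil k/r\rceil$ (hence also for any smaller set), then serves its assigned items reading at most one item per server, and the copies together cover the request. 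The distribution I have in mind writes the requested items consecutively, each written as many times as its multiplicity (a length-$k$ word), and sends the $t$-th letter to copy $((t-1)\bmod r)+1$: consecutiveness together with multiplicity $\le r$ forces the occurrences of any single item into distinct copies, while the cyclic assignment caps every copy at $\lceil k/r\rceil$ letters.

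The main obstacle is precisely this routing step in part (ii): a naive greedy distribution (for instance, filling one copy before moving to the next) can pile as many as $k$ distinct items onto a single copy and thereby exceed its batch size $\lceil k/r\rceil$, so the crux is the balanced cyclic schedule, which enforces both the ``distinct items per copy'' constraint and the ``load $\le\lceil k/r\rceil$ per copy'' constraint at once. In part (i) the only points needing care are that folding never increases storage and that a system of distinct representatives for the lifted request translates back into the correct multiplicities after applying $\phi$.
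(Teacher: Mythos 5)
Your proposal is correct. For part (i) it coincides with the paper's own proof: the lower bound is exactly the forward direction of Claim~\ref{cl1} applied to an optimal MCBC, and your folding map $\phi(\ell+jn)=\ell$ is precisely the paper's construction $C_i=\{(\ell-1)\bmod n+1:\ell\in F_i\}$, verified the same way by lifting a multiset request to $r_\ell$ distinct representatives inside each $U^{(\ell)}$ and folding the resulting system of representatives back. For part (ii) you build the same code as the paper---$r$ disjoint copies of an optimal $\bigl(n,N,\lceil k/r\rceil,\lfloor m/r\rfloor\bigr)$-CBC on $r\lfloor m/r\rfloor$ servers, the rest empty---but you certify it by a genuinely different argument. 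The paper never routes anything: it passes to the dual set systems, observes that each dual block $B_i$ consists of the $r$ disjoint translates of $G_i$, so $|\cup_{j=1}^{h}B_{i_j}|=r\,|\cup_{j=1}^{h}G_{i_j}|\geq hr\geq\min\{hr,k\}$ for all $h\leq\lceil k/r\rceil$, and then invokes the sufficiency direction of Theorem~\ref{mhallc}, which absorbs all the routing once and for all. Your cyclic schedule (send the $t$-th letter of the concatenated request word to copy $((t-1)\bmod r)+1$) instead proves the retrieval property directly, and both of your claims are sound: consecutive occurrences of one item land in distinct copies because its multiplicity is at most $r$, and each copy receives at most $\lceil k/r\rceil$ letters, hence a \emph{set} of distinct items it can serve as a CBC (via Hall's condition for all $h\in[\lceil k/r\rceil]$, not by padding, which would need $n\geq\lceil k/r\rceil$). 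What the paper's route buys is brevity and uniformity---the Hall-type criterion does the work in two lines; what your route buys is an explicit retrieval algorithm and independence from the (nontrivial) sufficiency direction of Theorem~\ref{mhallc}, making part (ii) self-contained given only the classical Hall's theorem inside each copy.
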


\begin{proof}
(i) From the proof of Claim~\ref{cl1} in Theorem~\ref{mhallc}, we can see that if there exists an $(n,N,k,m;r)$-MCBC, then there exists an $(rn,rN,k,m)$-CBC, and therefore $N(nr,k,m) \leq r N(n,k,m;r)$.

Assume that there exists an $(rn,N,k,m)$-CBC given by the set system $(U=[rn], \cF=\{F_1,F_2,\dots,F_m\})$. We construct a new set system $(V=[n],\cC=\{C_1,\ldots,C_m\})$ as follows.
The $i$-th server contains the items given by the set $C_i = \{(\ell-1) (\bmod  n)+1 :  \ell \in F_i\}$. 
That is, each item in the set $U^{(\ell)}=\{\ell+jn:j\in[0,r-1]\}$ in each server is replaced with $\ell$ for $\ell\in[n]$ (without repetitions). The new set system $(V,\cC)$ defines an $(n,N',k',m;r')$-MCBC with storage $N'\leq N$. To complete this proof we will show that $k'=k$ and $r'=r$.


Let $Q$ be a multiset request for $(V,\cC)$ where the $\ell$-th element, $\ell\in [n]$, is requested $r_\ell$ times, so $0\leq r_\ell\leq r$ and $\sum_{\ell=1}^nr_\ell=k$. Consider the request $P$ of $(U,\cF)$ which contains any $r_\ell$ distinct elements from $U^{(\ell)}$. Since $(U,\cF)$ is an $(rn,N,k,m)$-CBC, $P$ can be read by taking subsets $D_j\subseteq F_j$, $|D_j|\leq 1$ for $j\in[m]$. Then $Q$ can be read from the servers $\{C_j:j\in[m],|D_j|=1\}$. Hence, $(V,\cC)$ is an $(n,N',k,m;r)$-MCBC with $N'\leq N$, and $N(n,k,m;r)\leq N(rn,k,m)$.

(ii) Assume that there exists an $(n,N,\left\lceil\frac{k}{r}\right\rceil,\left\lfloor\frac{m}{r}\right\rfloor)$-CBC given by the set system $(V=[n], \cF=\{F_1,F_2,\ldots,F_{\lfloor\frac{m}{r}\rfloor}\})$. We construct a new code by the following set system $(V=[n], \cC=\{C_1,C_2,\dots,C_m\})$, such that $C_{i+j\lfloor\frac{m}{r}\rfloor}=F_i$ for any $i\in\left[\left\lfloor\frac{m}{r}\right\rfloor\right]$ and $j\in[0,r-1]$, and $C_\ell=\emptyset$ for any $r\lfloor\frac{m}{r}\rfloor+1 \leq \ell \leq m$. That is, each $F_i$ for $i\in\left[\left\lfloor\frac{m}{r}\right\rfloor\right]$ is repeated $r$ times.

Assume that the dual set system of $(V,\cF)$ is $(Y=\left[\left\lfloor\frac{m}{r}\right\rfloor\right],\cG=\{G_1,\dots,G_n\})$, and the dual set system of $(V,\cC)$ is $(X=[m],\cB=\{B_1,\dots,B_n\})$. Then $|B_i|=r|G_i|$ for $i\in[n]$. Since $(V,\cF)$ is an $(n,N,\left\lceil\frac{k}{r}\right\rceil,\left\lfloor\frac{m}{r}\right\rfloor)$-CBC, for any $1\leq h\leq \left\lceil\frac{k}{r}\right\rceil$, and distinct $i_1,\ldots,i_h\in [n]$, $|\cup_{j=1}^h G_{i_j}|\geq h$ by Theorem~\ref{hallc}. Then for any $1\leq h\leq \left\lceil\frac{k}{r}\right\rceil$, $|\cup_{j=1}^h B_{i_j}|\geq hr\geq \min\{hr,k\}$. Therefore, by Theorem~\ref{mhallc}, $(V,\cC)$ is an $(n,rN,k,m;r)$-MCBC, and $N(n,k,m;r)\leq rN(n,\left\lceil\frac{k}{r}\right\rceil,\left\lfloor\frac{m}{r}\right\rfloor)$.
\end{proof}

Let $(V,\cC)$ be a set system of an $(n,N,k,m;r)$-MCBC and let $(X,\cB)$ be its dual set system. For $i\geq 0$, we denote by $A_i$ the number of subsets in $\cB$ of size $i$. Note that for $i<r$, $A_i=0$ since every item is contained in at least $r$ different servers.
As pointed in \cite{PSW09}, $A_i=0$ for $i\geq k+1$ since for any block of size larger than $k$, we can reduce the block to $k$ points and the multiset Hall's condition is still satisfied. The following bound is a generalization of the results in \cite{BRR12,BT11A,PSW09}.

\begin{lemma}\label{lem:bound}
\label{b.sum} If $(X,\cB)^\ast$ is an $(n,N,k,m;r)$-MCBC with $r\leq k-1$, and $A_i$ for $i\in[k-1]$ is defined as above, then $$\sum_{i=r}^{k-1}{m-i\choose k-1-i}A_i\leq \left\lfloor\frac{k-1}{r}\right\rfloor{m\choose k-1}.$$
\end{lemma}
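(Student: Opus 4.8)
The plan is a double-counting argument over the $(k-1)$-subsets of $X=[m]$. For a fixed $(k-1)$-subset $S\subseteq[m]$, let $f(S)$ denote the number of blocks $B\in\cB$ with $B\subseteq S$. A block of size $i$ with $r\le i\le k-1$ is contained in exactly $\binom{m-i}{k-1-i}$ of the $(k-1)$-subsets; blocks of size $k$ cannot fit inside an $S$ of size $k-1$, and $A_i=0$ for $i<r$. Hence, counting the pairs $(B,S)$ with $B\subseteq S$ and $|S|=k-1$ in two ways gives the identity
$$\sum_{i=r}^{k-1}\binom{m-i}{k-1-i}A_i=\sum_{|S|=k-1}f(S).$$
The claimed inequality then reduces to bounding this sum over all $\binom{m}{k-1}$ subsets $S$.

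The heart of the argument is the local bound $f(S)\le\lfloor\frac{k-1}{r}\rfloor$ for every $(k-1)$-subset $S$. I would prove it by contradiction: suppose $S$ contained $h:=\lfloor\frac{k-1}{r}\rfloor+1$ distinct blocks $B_{i_1},\dots,B_{i_h}$. The elementary identity $\lfloor\frac{k-1}{r}\rfloor+1=\lceil\frac{k}{r}\rceil$ (the standard $\lceil a/b\rceil=\lfloor(a-1)/b\rfloor+1$) shows that $h=\lceil\frac{k}{r}\rceil$ lies exactly in the range $\left[\lceil\frac{k}{r}\rceil\right]$ for which the multiset Hall's condition of Theorem~\ref{mhallc} applies. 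Moreover $hr=\lceil\frac{k}{r}\rceil\, r\ge k$, so $\min\{hr,k\}=k$, and the multiset Hall's condition forces $|\cup_{j=1}^h B_{i_j}|\ge k$. But all these blocks lie inside $S$, so $|\cup_{j=1}^h B_{i_j}|\le|S|=k-1$, a contradiction. Hence $f(S)\le\lfloor\frac{k-1}{r}\rfloor$.

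Substituting this local bound into the double-counting identity and summing over all $\binom{m}{k-1}$ subsets yields exactly
$$\sum_{i=r}^{k-1}\binom{m-i}{k-1-i}A_i\le\left\lfloor\frac{k-1}{r}\right\rfloor\binom{m}{k-1},$$
as desired. The delicate point — and the step I would be most careful about — is matching the local bound to the floor $\lfloor\frac{k-1}{r}\rfloor$ rather than to a weaker quantity: this hinges entirely on the identity $\lfloor\frac{k-1}{r}\rfloor+1=\lceil\frac{k}{r}\rceil$, which simultaneously guarantees that the chosen $h$ blocks fall within the admissible range of the multiset Hall's condition and that the relevant minimum $\min\{hr,k\}$ equals $k$ (so that $k-1$ blocks inside $S$ genuinely contradict it). Everything else is routine double counting, relying only on $A_i=0$ for $i<r$ and the fact that size-$k$ blocks contribute nothing to any $f(S)$.
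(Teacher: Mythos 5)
Your proof is correct and is essentially the paper's own argument: the paper carries out the same double count via the incidence matrix $M_{k-1}$ whose rows are the $(k-1)$-subsets and whose columns are the blocks of size less than $k$, and its row bound of $\left\lfloor\frac{k-1}{r}\right\rfloor$ ones per row is exactly your local bound $f(S)\le\left\lfloor\frac{k-1}{r}\right\rfloor$, proved by the same contradiction using $\left\lfloor\frac{k-1}{r}\right\rfloor+1=\left\lceil\frac{k}{r}\right\rceil$ and $\min\{hr,k\}=k$ in the multiset Hall's condition. The only difference is notational (pairs $(B,S)$ versus counting ones in a matrix), so there is nothing to add.
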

\begin{proof}
Let $M_{k-1}$ be the ${m\choose k-1}\times n$ matrix, whose rows are labeled by all the $(k-1)$-subsets of $X$, and the columns are labeled by the blocks in $\cB$ that contain less than $k$ points. The $(i,j)$-th entry of $M_{k-1}$ is $1$ if the $j$-th block $B_j$ is contained in the $i$-th $(k-1)$-subset of $X$, and otherwise it is $0$.

Each row in $M_{k-1}$ has at most $\left\lfloor\frac{k-1}{r}\right\rfloor$ ones. In order to verify this property, assume in the contrary that there exist $\left\lfloor\frac{k-1}{r}\right\rfloor+1$ blocks, and without loss of generality let them be the blocks $B_1,B_2,\dots,B_{\lfloor\frac{k-1}{r}\rfloor+1}$, which are all subsets of the same $(k-1)$-subset. Therefore, $|\cup_{i=1}^{\lfloor\frac{k-1}{r}\rfloor+1} B_i|\leq k-1$, and the multiset Hall's condition is not satisfied, since $\lfloor\frac{k-1}{r}\rfloor+1 = \lceil k/r \rceil$ and  $\min\{ (\lfloor\frac{k-1}{r}\rfloor+1)r,k \} = k$. Every column which corresponds to a block of size $i<k$ has exactly ${m-i\choose k-1-i}$ ones. Therefore, by counting the number of ones in $M_{k-1}$ by rows and columns separately, we get that $\sum_{i=r}^{k-1}{m-i\choose k-1-i}A_i\leq\left\lfloor\frac{k-1}{r}\right\rfloor{m\choose k-1}$.
\end{proof}

According to Lemma~\ref{lem:bound}, we derive the next theorem.
\begin{theorem}
\label{bound} Let $r\leq k-1$. For any $c\in[r,k-1]$,
$$N(n,k,m;r)\geq nc-\left\lfloor\frac{k-c}{m-k+1}\left[\frac{\left\lfloor\frac{k-1}{r}\right\rfloor{m\choose k-1}}{{m-c\choose k-1-c}}-n\right]\right\rfloor.$$
\end{theorem}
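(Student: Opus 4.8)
The plan is to derive the bound from Lemma~\ref{lem:bound} by combining two ingredients: a lower bound on $N$ in terms of the block-size distribution $A_i$, and the averaging inequality already established. Recall that $N = \sum_{i=r}^{k} i\,A_i$ (the total storage equals the sum over all items of how many servers store them), and that $\sum_{i=r}^{k} A_i = n$ (there are $n$ items total). Since by the discussion preceding Lemma~\ref{lem:bound} we may assume $A_i=0$ for $i\geq k+1$, both sums run from $r$ to $k$. The goal is to turn the weighted constraint $\sum_{i=r}^{k-1}\binom{m-i}{k-1-i}A_i \leq \lfloor\frac{k-1}{r}\rfloor\binom{m}{k-1}$ into a lower bound on $N$.

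First I would fix the free parameter $c\in[r,k-1]$ and write $N = \sum_{i=r}^{k} i A_i$. I want to show $N \geq nc - \lfloor\frac{k-c}{m-k+1}(\frac{\lfloor (k-1)/r\rfloor\binom{m}{k-1}}{\binom{m-c}{k-1-c}} - n)\rfloor$. The natural approach is to lower-bound each coefficient: I would verify that for every $i$ in the relevant range, $i \geq c - \frac{k-c}{m-k+1}\cdot\frac{\binom{m-i}{k-1-i}}{\binom{m-c}{k-1-c}} + (\text{something that uses } \sum A_i = n)$. Concretely, the cleanest route is to establish the pointwise inequality
$$
i \;\geq\; c \;-\; \frac{k-c}{m-k+1}\cdot\frac{\binom{m-i}{k-1-i}}{\binom{m-c}{k-1-c}} \;+\; \frac{k-c}{m-k+1}
$$
valid for all $i\in[r,k]$ (treating $\binom{m-k}{-1}=0$ so that the $i=k$ term behaves correctly), then multiply by $A_i$, sum over $i$, and substitute $\sum_i A_i = n$ together with Lemma~\ref{lem:bound}.

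The main obstacle will be verifying that pointwise coefficient inequality cleanly across the whole range $i\in[r,k]$, since the ratio $\binom{m-i}{k-1-i}/\binom{m-c}{k-1-c}$ behaves differently above and below $i=c$, and one must check it holds with the correct direction in both regimes. I expect the key algebraic fact to be that the function $i \mapsto i + \frac{k-c}{m-k+1}\cdot\frac{\binom{m-i}{k-1-i}}{\binom{m-c}{k-1-c}}$ attains a suitable lower bound, using monotonicity of the binomial ratio in $i$. Once the pointwise bound is in hand, the summation gives
$$
N \;=\; \sum_{i=r}^{k} i A_i \;\geq\; c\sum_{i} A_i - \frac{k-c}{m-k+1}\left(\frac{\sum_i \binom{m-i}{k-1-i}A_i}{\binom{m-c}{k-1-c}} - \sum_i A_i\right),
$$
and plugging in $\sum_i A_i = n$ and the Lemma~\ref{lem:bound} bound on the weighted sum yields exactly the claimed expression, after noting that $N$ is an integer so the floor may be applied.

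Finally I would handle the edge cases in the summation carefully: the $i=k$ term contributes $\binom{m-k}{-1}=0$ to the weighted sum, so it is consistent to extend Lemma~\ref{lem:bound}'s sum up to $k$, and the normalization by $\binom{m-c}{k-1-c}$ is well-defined because $c\leq k-1$ forces $k-1-c\geq 0$. The integrality of $N$ is what permits replacing the real-valued lower bound by its floor, completing the derivation.
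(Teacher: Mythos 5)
Your overall strategy is sound, and it is in fact the argument the paper itself relies on: the paper gives no proof of Theorem~\ref{bound}, deferring to Lemma 3.2 of \cite{BRR12}, and that argument is precisely the one you outline --- combine $\sum_i A_i=n$ and $N=\sum_i iA_i$ with the weighted constraint of Lemma~\ref{lem:bound} through a coefficient-wise (LP-dual) inequality, then use integrality of $N-cn$ to insert the floor. Your bookkeeping around this skeleton is also correct: $A_i=0$ for $i<r$ and (w.l.o.g.) for $i>k$; the $i=k$ column is handled by the convention $\binom{m-k}{-1}=0$; because the multiplier of the weighted sum is negative, replacing $\sum_i\binom{m-i}{k-1-i}A_i$ by its upper bound from Lemma~\ref{lem:bound} preserves the inequality; and $N\geq cn-x$ with $cn-N\in\Z$ yields $N\geq cn-\lfloor x\rfloor$.

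The genuine gap is that the entire mathematical content of the theorem sits in your pointwise inequality, and you never prove it --- you only ``expect'' it to hold. It does hold, so your plan survives, but the proof is incomplete without the following verification. Write $w_i=\binom{m-i}{k-1-i}$ for $r\leq i\leq k-1$, set $w_k=0$, and let $g(i)=i+\lambda w_i$ with $\lambda=\frac{k-c}{(m-k+1)w_c}$; your claim is $g(i)\geq g(c)=c+\frac{k-c}{m-k+1}$ for all integers $i\in[r,k]$. By Pascal's rule, $w_i-w_{i+1}=\binom{m-i-1}{k-1-i}$, and these differences are non-increasing in $i$ (successive ratios equal $\frac{m-i-1}{k-1-i}\geq 1$), so the increments $g(i+1)-g(i)=1-\lambda\binom{m-i-1}{k-1-i}$ are non-decreasing in $i$; that is, $g$ is convex on the integer interval $[r,k]$. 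Moreover, from the identities $\binom{m-c}{k-c}=\frac{m-k+1}{k-c}\,w_c$ and $\binom{m-c-1}{k-1-c}=\frac{m-k+1}{m-c}\,w_c$ one computes $g(c)-g(c-1)=1-\lambda\binom{m-c}{k-c}=0$ and $g(c+1)-g(c)=1-\frac{k-c}{m-c}\geq 0$ (using $k\leq m$). Convexity then forces $g$ to be non-increasing on $[r,c]$ and non-decreasing on $[c,k]$, so $c$ is a global minimizer of $g$ on $[r,k]$, which is exactly the pointwise bound you needed; note it is tight at $i=c$ and $i=c-1$, so no slack is wasted. With this paragraph added, your derivation $N\geq g(c)\,n-\lambda\left\lfloor\frac{k-1}{r}\right\rfloor\binom{m}{k-1}=cn-\frac{k-c}{m-k+1}\left(\frac{\lfloor\frac{k-1}{r}\rfloor\binom{m}{k-1}}{\binom{m-c}{k-1-c}}-n\right)$ followed by the integrality step is a complete and correct proof.
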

\begin{proof}
The proof is similar to the one given in Lemma 3.2 in~\cite{BRR12}, and hence we omit it here.
\end{proof}

\section{Constructions of MCBCs}\label{sec:construction}

In this section we present several constructions of MCBCs. Constructions~\ref{c.largen} and \ref{cons2} are generalizations of the equivalent ones in \cite{BRR12,BT11A,PSW09} which determine the value of $N(n,k,m)$ in Theorem~\ref{CBC} (ii) and (iii). Construction~\ref{cons.m=k} is a generalization of that in \cite{PSW09} which determines the value of $N(n,k,k)$ in Theorem~\ref{CBC} (i).

\subsection{A Construction by Replication}

Our first construction uses simple replication which is a generalization of the one in \cite{BRR12,BT11A,PSW09}.
\begin{construction}
\label{c.largen}
Let $n,k,m,r$ be positive integers such that $n\geq \left\lfloor\frac{k-1}{r}\right\rfloor{m\choose k-1}$ with $r<k$. We construct an $(n,N,k,m;r)$-MCBC with $N=kn-\left\lfloor\frac{k-1}{r}\right\rfloor{m\choose k-1}$, by explicitly constructing its dual set system $(X=[m],\cB=\{B_1,\ldots,B_n\})$ as follows:
\begin{enumerate}
\item The first $\left\lfloor\frac{k-1}{r}\right\rfloor{m\choose k-1}$ blocks of $\cB$ consist of $\left\lfloor\frac{k-1}{r}\right\rfloor$ copies of all different $(k-1)$-subsets of $[m]$.
\item Each remaining block of $\cB$ is taken to be any $k$-subset of $[m]$.
\end{enumerate}
Thus, the value of $N$ is given by
$$N=\left\lfloor\frac{k-1}{r}\right\rfloor{m\choose k-1} (k-1) + \left(n-  \left\lfloor\frac{k-1}{r}\right\rfloor{m\choose k-1}\right)k = kn -  \left\lfloor\frac{k-1}{r}\right\rfloor{m\choose k-1}.$$
\end{construction}
The correctness of this construction is proved in the next theorem.
\begin{theorem}
The code $(X,\cB)^*$ from Construction~\ref{c.largen} is an $(n,N,k,m;r)$-MCBC with $n\geq \left\lfloor\frac{k-1}{r}\right\rfloor{m\choose k-1}$, $r<k$ and $N=kn-\left\lfloor\frac{k-1}{r}\right\rfloor{m\choose k-1}$.
\end{theorem}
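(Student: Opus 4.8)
The plan is to verify that the dual set system $(X,\cB)$ constructed in Construction~\ref{c.largen} satisfies the multiset Hall's condition from Theorem~\ref{mhallc}, since by that theorem this is both necessary and sufficient for $(X,\cB)^*$ to be an $(n,N,k,m;r)$-MCBC. The storage value $N=kn-\lfloor\frac{k-1}{r}\rfloor{m\choose k-1}$ is already computed in the construction itself, so the only real content is verifying the Hall-type inequality: for every $h\in[\lceil\frac{k}{r}\rceil]$ and any $h$ distinct blocks $B_{i_1},\dots,B_{i_h}\in\cB$, we must show $|\cup_{j=1}^h B_{i_j}|\geq\min\{hr,k\}$.

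First I would recall that every block of $\cB$ has size either $k-1$ (the $\lfloor\frac{k-1}{r}\rfloor$ copies of each $(k-1)$-subset) or $k$ (the remaining blocks). The key structural observation is that no $(k-1)$-subset of $[m]$ is used as a block more than $\lfloor\frac{k-1}{r}\rfloor$ times. I would then fix $h$ distinct blocks and split the argument by the size of $\min\{hr,k\}$. When $h=\lceil\frac{k}{r}\rceil$, the target is $\min\{hr,k\}=k$, and I must show the union of any $h=\lceil\frac{k}{r}\rceil$ distinct blocks has at least $k$ points. When $h\leq\lfloor\frac{k-1}{r}\rfloor$, the target is $hr$ (using that $hr\leq\lfloor\frac{k-1}{r}\rfloor r\leq k-1<k$), and I must show the union has at least $hr$ points.

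The crux is the following counting argument. Suppose for contradiction that $h$ distinct blocks $B_{i_1},\dots,B_{i_h}$ have union of size at most $s-1$, where $s=\min\{hr,k\}$. Since each block has size at least $k-1$, if any block had size $k$ its union alone would already have $k\geq s$ points unless all blocks have size $k-1$ and lie in a common small set; so effectively all the relevant blocks are $(k-1)$-subsets contained in a common set $W$ with $|W|\leq s-1$. In the case $h=\lceil\frac{k}{r}\rceil$ with target $k$, if the union had size at most $k-1$ then all $h$ blocks would be $(k-1)$-subsets of a fixed $(k-1)$-set $W$ with $|W|=k-1$, forcing all $h$ blocks to equal $W$; but there are only $\lfloor\frac{k-1}{r}\rfloor=\lceil\frac{k}{r}\rceil-1<h$ copies of $W$ available, a contradiction. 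In the case $h\leq\lfloor\frac{k-1}{r}\rfloor$ with target $hr$, a union of size at most $hr-1$ would force all $h$ blocks to be $(k-1)$-subsets of a set of size at most $hr-1\leq k-2$, which is impossible since a $(k-1)$-subset cannot fit in a set of fewer than $k-1$ points. Thus the multiset Hall's condition holds in every case.

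The main obstacle is the careful bookkeeping at the boundary value $h=\lceil\frac{k}{r}\rceil$, where the target jumps from $hr$ to $k$ and where the multiplicity bound $\lfloor\frac{k-1}{r}\rfloor$ becomes exactly tight; one must use the identity $\lfloor\frac{k-1}{r}\rfloor+1=\lceil\frac{k}{r}\rceil$ (valid for $r<k$, as already exploited in the proof of Lemma~\ref{lem:bound}) to conclude that $h$ distinct copies of a single $(k-1)$-subset cannot occur. Everything else reduces to the elementary fact that a $(k-1)$-subset cannot be contained in a set of size less than $k-1$, together with Remark~\ref{basic}(i), which lets me conclude the condition for all larger $h$ once it is established at the threshold. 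I would therefore organize the proof to handle $h<\lceil\frac{k}{r}\rceil$ and $h=\lceil\frac{k}{r}\rceil$ separately and invoke the multiset Hall's condition of Theorem~\ref{mhallc} to finish.
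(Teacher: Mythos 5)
Your proposal is correct and follows essentially the same route as the paper: both verify the multiset Hall's condition of Theorem~\ref{mhallc} using the two key facts that a size-$k$ block or two distinct $(k-1)$-subsets already give a union of size at least $k$, while copies of a single $(k-1)$-subset are limited to $\left\lfloor\frac{k-1}{r}\right\rfloor$ by construction, in which case the union has size $k-1\geq hr$. The only difference is organizational (you argue by contradiction with a case split on $h$, the paper argues directly with a case split on the block configuration), which does not change the substance.
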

\begin{proof}
We only need to check that $(X,\cB)$ satisfies the multiset Hall's condition. For $1\leq h\leq \lceil\frac{k}{r}\rceil$, let $B_{i_1},B_{i_2},\dots,B_{i_h}\in\cB$ be some $h$ different blocks. If there exists a block of size $k$ or there exist two distinct blocks of size $k-1$, then $|\sum_{j=1}^n B_{i_j}|\geq k$; otherwise, we have $h\leq \left\lfloor\frac{k-1}{r}\right\rfloor$ by construction, and $|\sum_{j=1}^n B_{i_j}|\geq k-1\geq \min\{hr,k\}$.
\end{proof}
Before we show that this construction is optimal, let us recall a useful lemma from~\cite{BRR12,BT11A}.
\begin{lemma}[\cite{BRR12,BT11A}]
\label{ineq} Let $1\leq k\leq m$ and $0\leq i\leq k-1$. Then ${m-i\choose k-1-i}-1\geq (m-k+1)(k-1-i)$.
\end{lemma}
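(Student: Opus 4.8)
The plan is to reduce the inequality to a clean binomial estimate by a change of variables and then apply the hockey-stick identity. First I would set $q = m-k+1$ and $s = k-1-i$; since $1\leq k\leq m$ and $0\leq i\leq k-1$ we have $q\geq 1$ and $s\geq 0$, and moreover $m-i = q+s$. Hence both binomials collapse to a single coefficient, ${m-i\choose k-1-i} = {q+s\choose s}$, and the claim becomes exactly
$${q+s\choose s} - 1 \geq qs,$$
which I must establish for all integers $q\geq 1$ and $s\geq 0$.

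The main step is to invoke the hockey-stick identity ${q+s\choose s} = \sum_{j=0}^{s}{q-1+j\choose j}$. The $j=0$ term equals ${q-1\choose 0} = 1$, which I peel off to cancel the $-1$ on the left-hand side. It then remains to show $\sum_{j=1}^{s}{q-1+j\choose j} \geq qs$, and for this it suffices to prove the per-term bound ${q-1+j\choose j} \geq q$ for every $j\geq 1$; summing this over the $s$ terms then gives the desired inequality.

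The crux is this per-term bound. I would observe that the sequence ${q-1+j\choose j}$ is nondecreasing in $j$ for fixed $q\geq 1$: the ratio of consecutive terms is $(q+j)/(j+1)$, which is at least $1$ precisely when $q\geq 1$. Thus its minimum over $j\geq 1$ is attained at $j=1$, where it equals ${q\choose 1} = q$, giving the bound. I expect no serious obstacle here; the only point requiring care is the degenerate range $s\in\{0,1\}$ (that is, $i\in\{k-1,k-2\}$), where the inequality holds with equality and the peeled sum is either empty or a single term. As an alternative to the hockey-stick route, a straightforward induction on $s$ using the recurrence ${q+s+1\choose s+1} = \frac{q+s+1}{s+1}{q+s\choose s}$ also closes the gap, reducing the inductive step to the elementary inequality $qs(q-1)\geq 0$.
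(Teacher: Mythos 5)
Your proof is correct, but there is nothing in the paper to compare it against: the paper states this lemma only with a citation to \cite{BRR12,BT11A} and gives no proof, so your argument fills a gap rather than paralleling an internal one. All steps check out. The substitution $q=m-k+1$, $s=k-1-i$ is legitimate (indeed $m-i=q+s$, and the hypotheses give exactly $q\geq 1$, $s\geq 0$), turning the claim into ${q+s\choose s}-1\geq qs$. The hockey-stick identity ${q+s\choose s}=\sum_{j=0}^{s}{q-1+j\choose j}$ is applied correctly, the peeled $j=0$ term is indeed $1$, and the per-term bound ${q-1+j\choose j}\geq {q\choose 1}=q$ for $j\geq 1$ follows from your ratio computation $(q+j)/(j+1)\geq 1$; summing the $s$ remaining terms gives the claim, with the empty-sum case $s=0$ handled. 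Your alternative induction is also sound: its step does reduce, after clearing the denominator $s+1$, to $(q+s+1)(qs+1)-(qs+q+1)(s+1)=qs(q-1)\geq 0$. For what it is worth, there is an even shorter combinatorial route to the key inequality ${q+s\choose s}\geq 1+qs$: the $s$-set $T=\{q+1,\dots,q+s\}$ together with the $qs$ sets $\left(T\setminus\{q+j\}\right)\cup\{a\}$ for $j\in[s]$, $a\in[q]$ are pairwise distinct $s$-subsets of $[q+s]$, since each of the latter determines $a$ and $j$ uniquely; this avoids both the identity and the monotonicity argument, but your version is complete as written.
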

We can now deduce that Construction~\ref{c.largen} is optimal.
\begin{cor}
\label{b.largen} For any $n\geq \left\lfloor\frac{k-1}{r}\right\rfloor{m\choose k-1}$, $N(n,k,m;r)=kn-\left\lfloor\frac{k-1}{r}\right\rfloor{m\choose k-1}$.
\end{cor}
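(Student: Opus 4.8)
The plan is to establish matching upper and lower bounds on $N(n,k,m;r)$. The upper bound $N(n,k,m;r)\le kn-\lfloor\frac{k-1}{r}\rfloor\binom{m}{k-1}$ is already in hand: it is exactly the storage of the code from Construction~\ref{c.largen}, whose correctness was verified in the preceding theorem. So the whole task reduces to proving the matching lower bound
$$N(n,k,m;r)\ge kn-\left\lfloor\frac{k-1}{r}\right\rfloor\binom{m}{k-1}.$$
I would work in the regime $r\le k-1$ in which Construction~\ref{c.largen} applies; the boundary case $r=k$ is separately covered by \Lref{trivial_bound}(iii).

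For the lower bound I would pass to the dual set system $(X,\cB)$ of an arbitrary optimal $(n,N,k,m;r)$-MCBC and count by block sizes. First I would record the two double-counting identities: writing $A_i$ for the number of blocks of $\cB$ of size $i$, one has $\sum_{i=r}^{k}A_i=n$ (every item contributes one block, and all block sizes lie in $[r,k]$) and $\sum_{i=r}^{k}iA_i=\sum_{i=1}^n|B_i|=N$, since counting incidences in $(V,\cC)$ by servers or by items gives the same total. Subtracting, the $i=k$ term cancels and
$$kn-N=\sum_{i=r}^{k-1}(k-i)A_i,$$
so it suffices to bound this sum above by $\lfloor\frac{k-1}{r}\rfloor\binom{m}{k-1}$.

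The key step is then a term-by-term comparison with the quantity controlled by \Lref{b.sum}, namely $\sum_{i=r}^{k-1}\binom{m-i}{k-1-i}A_i\le\lfloor\frac{k-1}{r}\rfloor\binom{m}{k-1}$. Since every $A_i\ge 0$, it is enough to verify the coefficient inequality $k-i\le\binom{m-i}{k-1-i}$ for each $r\le i\le k-1$. This is where \Lref{ineq} enters: it gives $\binom{m-i}{k-1-i}\ge 1+(m-k+1)(k-1-i)$, and since $m\ge k$ we have $m-k+1\ge 1$, whence the right-hand side is at least $1+(k-1-i)=k-i$. Chaining these bounds yields $kn-N\le\sum_{i=r}^{k-1}\binom{m-i}{k-1-i}A_i\le\lfloor\frac{k-1}{r}\rfloor\binom{m}{k-1}$, which is exactly the desired lower bound.

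I expect no serious obstacle here: once the double-counting identities are set up, the argument is a short comparison, and the only thing to watch is the reduction of the coefficient inequality $k-i\le\binom{m-i}{k-1-i}$ to the trivially true $m-k+1\ge 1$ via \Lref{ineq}. The one point deserving a line of care is confirming that the two sums range over the same indices $i\in[r,k-1]$, so that the term-by-term estimate is legitimate. Note also that the hypothesis $n\ge\lfloor\frac{k-1}{r}\rfloor\binom{m}{k-1}$ plays no role in the lower bound itself; it is only needed to guarantee that Construction~\ref{c.largen}, which realizes the matching upper bound, exists.
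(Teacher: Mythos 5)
Your proposal is correct and follows essentially the same route as the paper's own proof: the lower bound via the identity $kn-N=\sum_{i=r}^{k-1}(k-i)A_i$, the coefficient comparison $k-i\le\binom{m-i}{k-1-i}$ from \Lref{ineq}, and the counting bound of \Lref{b.sum}, matched against the upper bound from Construction~\ref{c.largen}. Your added remarks (that the hypothesis on $n$ is only needed for the construction, and that $r=k$ is covered by \Lref{trivial_bound}(iii)) are accurate refinements of the same argument, not a different approach.
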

\begin{proof}
For any $(n,N,k,m;r)$-MCBC, let $A_i$ for $i\in[k]$ be the number of blocks in the dual set system of size $i$. By Lemma~\ref{ineq}, for $i\leq k-1$, ${m-i\choose k-1-i}\geq (m-k+1)(k-1-i)+1\geq k-i$, then $$\sum_{i=r}^{k-1}(k-i)A_i\leq \sum_{i=r}^{k-1}{m-i\choose k-1-i}A_i\leq \left\lfloor\frac{k-1}{r}\right\rfloor{m\choose k-1},$$
and the last inequality holds according to Lemma~\ref{b.sum}. Therefore, we get that
\begin{align*}
N & =\sum_{i=r}^k iA_i = \sum_{i=r}^k (k-(k-i))A_i= \sum_{i=r}^k kA_i - \sum_{i=r}^k (k-i)A_i \\
 & = kn - \sum_{i=r}^{k-1} (k-i)A_i\geq kn-\left\lfloor\frac{k-1}{r}\right\rfloor{m\choose k-1}.
\end{align*}
Hence, we conclude that $N(n,k,m;r)=kn-\left\lfloor\frac{k-1}{r}\right\rfloor{m\choose k-1}$ when $n\geq \left\lfloor\frac{k-1}{r}\right\rfloor{m\choose k-1}$, since the codes from Construction~\ref{c.largen} achieve this bound.
\end{proof}
As a special case when $r=k-1$ we get the following corollary.
\begin{cor}
\label{r=k-1} $N(n,k,m;k-1)=\begin{cases}
kn-{m\choose k-1} & \text{if $n\geq {m\choose k-1}$}, \\
(k-1)n & \text{if $n< {m\choose k-1}$}. \\
\end{cases}$
\end{cor}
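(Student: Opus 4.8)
The plan is to deduce Corollary~\ref{r=k-1} by specializing the two already-established results, Corollary~\ref{b.largen} and Lemma~\ref{trivial_bound}(ii), to the value $r=k-1$. The key arithmetic observation is that $\lfloor\frac{k-1}{r}\rfloor = \lfloor\frac{k-1}{k-1}\rfloor = 1$, so the general formula $kn - \lfloor\frac{k-1}{r}\rfloor\binom{m}{k-1}$ collapses to $kn - \binom{m}{k-1}$. This immediately gives the first branch whenever Corollary~\ref{b.largen} applies.

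First I would handle the case $n \geq \binom{m}{k-1}$. Since $r = k-1$ means $\lfloor\frac{k-1}{r}\rfloor\binom{m}{k-1} = \binom{m}{k-1}$, the hypothesis $n \geq \binom{m}{k-1}$ is exactly the hypothesis $n \geq \lfloor\frac{k-1}{r}\rfloor\binom{m}{k-1}$ of Corollary~\ref{b.largen}. Applying that corollary directly yields $N(n,k,m;k-1) = kn - \binom{m}{k-1}$, which is the first branch.

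The second case, $n < \binom{m}{k-1}$, is the part requiring a separate argument, and I expect this to be the main (though still modest) obstacle. Here I would argue that $N(n,k,m;k-1) = (k-1)n$. The lower bound $N(n,k,m;k-1) \geq (k-1)n$ follows from Lemma~\ref{trivial_bound}(i), since $r = k-1$ forces every item to be stored in at least $k-1$ servers. For the matching upper bound, I would exhibit a construction: take the dual set system $(X=[m],\cB=\{B_1,\dots,B_n\})$ whose $n$ blocks are $n$ \emph{distinct} $(k-1)$-subsets of $[m]$, which is possible precisely because $n < \binom{m}{k-1}$ (so enough distinct $(k-1)$-subsets exist, with no repetition needed). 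I then must verify the multiset Hall's condition of Theorem~\ref{mhallc}: for $r=k-1$ we have $\lceil\frac{k}{r}\rceil = \lceil\frac{k}{k-1}\rceil = 2$, so I only need to check $h \in \{1,2\}$. For $h=1$, a single block has size $k-1 = \min\{r,k\}$, which suffices. For $h=2$, two \emph{distinct} $(k-1)$-subsets have union of size at least $k$, matching $\min\{2r,k\} = \min\{2(k-1),k\} = k$ whenever $k \geq 2$. Hence the multiset Hall's condition holds, the construction is valid with $N = (k-1)n$, and combined with the lower bound this settles the second branch.

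The only subtlety to keep an eye on is the degenerate edge where $k=1$; but since the whole development assumes $r \leq k-1$ with $r \geq 1$, we have $k \geq 2$ throughout, so the union-of-two-distinct-subsets bound $k-1 < |B_{i_1}\cup B_{i_2}|$, i.e.\ the union has at least $k$ elements, is always available. With both branches established, the piecewise formula of the corollary follows.
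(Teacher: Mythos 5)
Your proposal is correct and follows essentially the same route as the paper: the first branch via Corollary~\ref{b.largen} with $\lfloor\frac{k-1}{k-1}\rfloor=1$, and the second branch via the lower bound of Lemma~\ref{trivial_bound}(i) matched by a dual set system of $n$ distinct $(k-1)$-subsets of $[m]$. The only difference is that you explicitly carry out the multiset Hall's condition check (cases $h=1,2$) that the paper dismisses as "readily verified," which is a welcome addition rather than a deviation.
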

\begin{proof}
For $n\geq {m\choose k-1}$, according to Corollary~\ref{b.largen} for $r=k-1$, we get that $N(n,k,m;k-1) = kn-{m\choose k-1}$. For $n<{m\choose k-1}$, we slightly modify the code from Construction~\ref{c.largen} such that the $n$ blocks in $\cB$ are some different $(k-1)$-subsets of $[m]$. It is readily verifies that the multiset Hall's condition holds for this modified construction and thus it provides an $(n,N=n(k-1),k,m;k-1)$-MCBC. Finally, according to Lemma~\ref{trivial_bound} (i), this construction is optimal.
\end{proof}

\subsection{Constructions Based on Constant Weight Codes}
Next, we give constructions based upon constant weight codes. Let $(n,d,w)$-code denote a binary constant weight code of length $n$, weight $w$ and minimum Hamming distance $d$, and let $A(n,d,w)$ denote the maximum number of codewords of an $(n,d,w)$-code.

\begin{construction}
\label{cons.smalln} Let $X=[m]$ and $\cC$ be an $(m,2(k-w),w)$-code with $n$ codewords for some $w\in[r,k-1]$. Let $\cB=\{B_1,\ldots,B_n\}$ be the support sets of all the codewords in $\cC$.
\end{construction}
\begin{theorem}
The code $(X,\cB)^*$ from Construction~\ref{cons.smalln} is an $(n,wn,k,m;r)$-MCBC.
\end{theorem}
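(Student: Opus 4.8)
The plan is to verify that the dual set system $(X,\cB)$ from Construction~\ref{cons.smalln} satisfies the multiset Hall's condition of Theorem~\ref{mhallc}, since by that theorem this is exactly the condition for $(X,\cB)^*$ to be an $(n,N,k,m;r)$-MCBC. The storage count $N=wn$ is immediate: each of the $n$ blocks $B_i$ is the support of a weight-$w$ codeword, hence $|B_i|=w$, and so $N=\sum_{i=1}^n|B_i|=wn$. Note also that $w\in[r,k-1]$ ensures each item is stored in at least $r$ servers, consistent with the lower bound $N(n,k,m;r)\ge rn$.

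The heart of the argument is the union lower bound. Fix $h\in\left[\lceil\frac{k}{r}\rceil\right]$ and any $h$ distinct blocks $B_{i_1},\dots,B_{i_h}$; I must show $|\cup_{j=1}^h B_{i_j}|\ge \min\{hr,k\}$. First I would handle $h=1$: a single block has size $w\ge r=\min\{1\cdot r,\dots\}$ when $hr\le k$, so this base case is fine. For $h\ge 2$, the key is to exploit the minimum distance $d=2(k-w)$ of the code. For any two distinct codewords of weight $w$ at Hamming distance at least $2(k-w)$, the supports $B_{i_a},B_{i_b}$ satisfy $|B_{i_a}\cap B_{i_b}|=w-\frac{1}{2}d_H(B_{i_a},B_{i_b})\le w-(k-w)=2w-k$, hence $|B_{i_a}\cup B_{i_b}|=2w-|B_{i_a}\cap B_{i_b}|\ge 2w-(2w-k)=k$. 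So any two distinct blocks already have union of size at least $k$, which immediately gives $|\cup_{j=1}^h B_{i_j}|\ge k\ge\min\{hr,k\}$ whenever $h\ge 2$.

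Combining the two cases completes the verification: for $h=1$ the condition reduces to $w\ge\min\{r,k\}=r$ (using $w\ge r$ and $r\le k$), while for $h\ge 2$ the distance property of the constant weight code forces the union to reach size $k$. By Theorem~\ref{mhallc}, $(X,\cB)^*$ is therefore an $(n,wn,k,m;r)$-MCBC.

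The step I expect to be the only genuine content is translating the minimum distance $2(k-w)$ into the pairwise union bound; everything else is bookkeeping. The specific choice $d=2(k-w)$ is calibrated precisely so that two distinct supports already cover $k$ points, which is what lets $\min\{hr,k\}$ be met for every $h\ge 2$ regardless of $r$. One point to double-check is the edge behavior when $w=k-1$ (so $d=2$): then two distinct weight-$(k-1)$ supports can share at most $2w-k=k-2$ points, giving union at least $k$, so the bound still holds with equality in the tightest case. I would also remark that the range $w\in[r,k-1]$ is exactly what makes both the storage meaningful and the pairwise bound applicable, so no parameter falls outside the regime where the argument works.
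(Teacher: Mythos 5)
Your proof is correct and follows essentially the same route as the paper's: the paper also verifies the multiset Hall's condition by noting that each block has size $w\ge r$ (covering $h=1$) and that the minimum distance $2(k-w)$ forces any two distinct blocks to have union of size at least $k$ (covering all $h\ge 2$). You merely spell out the distance-to-intersection arithmetic that the paper leaves implicit.
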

The next theorem proves the correctness of this construction.
\begin{proof}
We only need to check that $(X,\cB)$ satisfies the multiset Hall's condition. It is satisfied as the size of each block in $\cB$ is $w\geq r$ and since the minimum distance of $\cC$ is $2(k-w)$, we get that the union of any two blocks in $\cB$ is at least $k$.
\end{proof}

If we take $w=r$, we get the following family of optimal codes.
\begin{cor}
\label{nsmall} For any $n\leq A(m,2(k-r),r)$, $N(n,k,m;r)=rn$.
\end{cor}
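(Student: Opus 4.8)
The goal is to show that for any $n\leq A(m,2(k-r),r)$, we have $N(n,k,m;r)=rn$. The plan is to combine two facts already available in the excerpt: the general lower bound and a specialization of the preceding construction.

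First I would invoke Lemma~\ref{trivial_bound}~(i), which gives $N(n,k,m;r)\geq rn$ unconditionally, since every one of the $n$ items must be stored in at least $r$ distinct servers. This settles the lower bound with no extra work, so the entire content of the corollary lies in producing a matching construction, i.e.\ an $(n,rn,k,m;r)$-MCBC.

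For the matching upper bound, I would specialize Construction~\ref{cons.smalln} by setting the weight parameter $w=r$. The preceding theorem shows that for any $w\in[r,k-1]$, an $(m,2(k-w),w)$-code with $n$ codewords yields, via its support sets, an $(n,wn,k,m;r)$-MCBC. Taking $w=r$ requires an $(m,2(k-r),r)$-code; by definition the maximum number of codewords of such a code is exactly $A(m,2(k-r),r)$. Hence whenever $n\leq A(m,2(k-r),r)$, one can choose $n$ codewords of such a code, and Construction~\ref{cons.smalln} produces an $(n,rn,k,m;r)$-MCBC. This gives $N(n,k,m;r)\leq rn$. The only thing to verify is that $w=r$ is an admissible choice, namely that $r\in[r,k-1]$; this holds precisely when $r\leq k-1$, which is the standing nondegenerate regime (the case $r=k$ being handled separately in Lemma~\ref{trivial_bound}~(iii)).

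Combining the two bounds yields $N(n,k,m;r)=rn$ for $n\leq A(m,2(k-r),r)$, as claimed. I do not anticipate any genuine obstacle here, since both ingredients are already proved in the excerpt; the corollary is essentially the observation that the minimum-weight instance $w=r$ of Construction~\ref{cons.smalln} meets the trivial lower bound. The one point worth stating explicitly is that the constant weight distance $2(k-w)$ becomes $2(k-r)$ under the substitution $w=r$, which is exactly the distance appearing in the code size $A(m,2(k-r),r)$ that bounds $n$.
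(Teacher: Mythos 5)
Your proposal is correct and follows essentially the same route as the paper: specialize Construction~\ref{cons.smalln} to $w=r$ to get an $(n,rn,k,m;r)$-MCBC whenever $n\leq A(m,2(k-r),r)$, and match it against the lower bound of Lemma~\ref{trivial_bound}~(i). Your explicit check that $w=r\in[r,k-1]$ (i.e., $r\leq k-1$, with $r=k$ covered by Lemma~\ref{trivial_bound}~(iii)) is a small point the paper leaves implicit, but it does not change the argument.
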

\begin{proof}
By Construction~\ref{cons.smalln}, if there exists an $(m,2(k-r),r)$-code with $A(m,2(k-r),r)$ codewords, we get an $(n,rn,k,m;r)$-MCBC for any $n\leq A(m,2(k-r),r)$, and it is optimal by Lemma~\ref{trivial_bound} (i).
\end{proof}

Constant weight codes are used in \cite{BRR12} to prove Theorem~\ref{CBC} (iv). Now, we give a similar construction for MCBCs.

\begin{construction}
\label{cons2} Let $X=[m]$, $r\leq k-2$. Let $\cC$ be an $(m,4,k-2)$-code with $\alpha$ codewords with $\alpha\leq A(k,4,k-2)$. First, let $\cB_0$ be a set of $\left\lfloor\frac{k-1}{r}\right\rfloor{m\choose k-1}$ blocks, in which each $(k-1)$-subset of $[m]$ appears $\left\lfloor\frac{k-1}{r}\right\rfloor$ times. Let $\cS$ consist of the support sets of the codewords in $\cC$. Then, for any block in $\cS$, add it to $\cB_0$, and remove one copy of each of its $m-k+2$ supersets\footnote{For a block $S\in\cS$ of size $k-2$, the {\em supersets} are the $(k-1)$-subsets of $[m]$ that contain $S$.} of size $k-1$ in $\cB_0$. Let the resulting block set be $\cB$.
\end{construction}
\begin{theorem}
The code $(X,\cB)^*$ from Construction~\ref{cons2} is an $(n,N,k,m;r)$-MCBC with $$n=\left\lfloor\frac{k-1}{r}\right\rfloor{m\choose k-1}-\alpha(m-k+1)\text{ and }N=n(k-1)-\alpha,$$
where $\alpha\leq A(k,4,k-2)$.
\end{theorem}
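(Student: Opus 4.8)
The plan is to verify two things: first, that the block set $\cB$ produced by Construction~\ref{cons2} has exactly the claimed cardinality $n=\left\lfloor\frac{k-1}{r}\right\rfloor{m\choose k-1}-\alpha(m-k+1)$ and yields total storage $N=n(k-1)-\alpha$; second, that $(X,\cB)$ satisfies the multiset Hall's condition, so that by Theorem~\ref{mhallc} the code $(X,\cB)^\ast$ is indeed an $(n,N,k,m;r)$-MCBC. The counting is the easy bookkeeping part. We start with $\left\lfloor\frac{k-1}{r}\right\rfloor{m\choose k-1}$ blocks in $\cB_0$, all of size $k-1$. For each of the $\alpha$ codewords we add one block of size $k-2$ and delete $m-k+2$ blocks of size $k-1$. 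Hence the number of blocks changes by $\alpha(1-(m-k+2))=-\alpha(m-k+1)$, giving the stated $n$, and the total storage is $\left\lfloor\frac{k-1}{r}\right\rfloor{m\choose k-1}(k-1)-\alpha(m-k+2)(k-1)+\alpha(k-2)$, which simplifies to $n(k-1)-\alpha$ after substituting the value of $n$.

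**Verifying the multiset Hall's condition.**

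The substantive part is checking the multiset Hall's condition: for every $h\in\left[\lceil\frac{k}{r}\rceil\right]$ and any $h$ distinct blocks, $|\cup_{j=1}^h B_{i_j}|\geq\min\{hr,k\}$. Since every block of $\cB$ has size at least $k-2\geq r$, the case $h=1$ is immediate. For $h\geq 2$, I would argue that whenever the union of two distinct blocks already has size at least $k$, the condition holds for all larger $h$ by Remark~\ref{basic}~(i); so the crux is to show that any two distinct blocks $B,B'\in\cB$ satisfy $|B\cup B'|\geq\min\{2r,k\}$, and to handle the remaining small-$h$ union of $(k-1)$-blocks. The only way $|B\cup B'|$ can be small is when both blocks are ``close.'' Two distinct $(k-1)$-subsets have union of size at least $k$. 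A $(k-2)$-block $S$ (a codeword support) together with a $(k-1)$-block $T$ has union of size $\geq k-1$, with equality exactly when $S\subset T$; but the construction removed from $\cB_0$ precisely the $m-k+2$ supersets of $S$, so no surviving $(k-1)$-block contains $S$, forcing $|S\cup T|\geq k$. Finally, two distinct $(k-2)$-blocks $S,S'$ are supports of distinct codewords in an $(m,4,k-2)$-code, so $|S\cap S'|\leq k-3$ by the minimum-distance-$4$ property, giving $|S\cup S'|\geq 2(k-2)-(k-3)=k-1$; since $r\leq k-2$ we have $\min\{2r,k\}\leq k$, and I expect $k-1$ to suffice in the range where only $(k-2)$-blocks are involved (i.e.\ $h\leq\lfloor\frac{k-1}{r}\rfloor$ so that $\min\{hr,k\}$ has not yet reached $k$), while a union that has genuinely reached $h=\lceil k/r\rceil$ will include enough blocks to force size $k$.

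**The main obstacle.**

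The delicate step is the bookkeeping for unions of $h\geq 2$ blocks that are all of size $k-2$ or a mixture: I must confirm that the distance-$4$ guarantee propagates from pairs to the required threshold $\min\{hr,k\}$ for every admissible $h$, rather than only for $h=2$. I expect this to reduce to noting that once we take $h=\lceil k/r\rceil$ distinct blocks, at least one pair has union of size $\geq k-1$ and, combined with the third point supplied by a further block, the union reaches $k$; the removal rule for supersets is exactly what rules out the dangerous configuration where many small blocks nest inside a single $(k-1)$-subset. I would organise the final write-up as a short case analysis on the sizes of the two (or few) blocks achieving the minimum union, leaning on the constant-weight distance bound and on Remark~\ref{basic}~(i) to lift the pairwise bound to all $h$.
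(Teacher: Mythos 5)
Your counting of $n$ and $N$ is correct and matches the paper's. The verification of the multiset Hall's condition, however, contains two concrete errors. First, the distance computation: for two distinct codewords of an $(m,4,k-2)$-code, constant weight $w=k-2$ together with Hamming distance at least $4$ forces $|S\cap S'|\leq w-2=k-4$, hence $|S\cup S'|\geq 2(k-2)-(k-4)=k$. Your bound $|S\cap S'|\leq k-3$ (giving union $\geq k-1$) is what mere distinctness of the two supports yields, i.e.\ distance $\geq 2$; it does not use the minimum distance of $\cC$ at all. This is not a cosmetic slip: since $r\leq k-2$ permits $2r\geq k$ (e.g.\ $k=6$, $r=4$), already the case $h=2$ can require union $\geq\min\{2r,k\}=k$, which your weaker bound cannot deliver, and your hedge that ``$k-1$ suffices in the range where only $(k-2)$-blocks are involved'' does not repair it. The correct computation gives union $\geq k$ outright, which is exactly how the paper disposes of the case of two blocks from $\cS$.

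Second, and more fundamentally, your claim that ``no surviving $(k-1)$-block contains $S$'' misreads the construction: for each codeword support $S$ only \emph{one copy} of each of its $m-k+2$ supersets is removed (your own counting paragraph relies on this), while $\cB_0$ contains $\lfloor\frac{k-1}{r}\rfloor$ copies of every $(k-1)$-subset. Thus whenever $\lfloor\frac{k-1}{r}\rfloor\geq 2$, i.e.\ $r\leq\frac{k-1}{2}$, supersets of $S$ do survive in $\cB$, and the configuration you declare impossible --- the block $S$ together with $h-1$ copies of a single superset $T\supset S$, whose union has size exactly $k-1$ --- genuinely occurs. This is the crux case of the whole proof, and it cannot be settled by any union-$\geq k$ argument, because the union really is $k-1$. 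The paper settles it by a multiplicity count: after the removal only $\lfloor\frac{k-1}{r}\rfloor-1$ copies of $T$ remain, so any such collection satisfies $h\leq\lfloor\frac{k-1}{r}\rfloor$, whence $\min\{hr,k\}=hr\leq r\lfloor\frac{k-1}{r}\rfloor\leq k-1$ and the condition holds even though the union is only $k-1$. Your proposal never makes this argument, and the vague closing paragraph (one pair with union $\geq k-1$ plus ``the third point supplied by a further block'') is not a substitute; repairing your proof requires restoring exactly this count of remaining copies.
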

\begin{proof}
Since the code has minimum distance four, for any two blocks in $\cS$, their supersets of size $k-1$ are different. Therefore, each $(k-1)$-subset of $[m]$ is removed at most once. During the process, we add $\alpha$ blocks and remove $\alpha(m-k+2)$ blocks. Hence, we get that $n=\left\lfloor\frac{k-1}{r}\right\rfloor{m\choose k-1}-\alpha(m-k+1)$. Finally, since only $\alpha$ of all $n$ blocks are of size $k-2$, we get that $N=n(k-1)-\alpha$.  

Next we show that the multiset Hall's condition holds. For $1\leq h\leq \lceil\frac{k}{r}\rceil$, let $B_{i_1},B_{i_2},\dots,B_{i_h}\in\cB$ be some $h$ different blocks. In case $h=1$, the size of each block in $\cB$ is at least $k-2\geq r$ so the condition holds and thus we assume that $h\geq 2$.
If there exist two blocks $B_{i_a},B_{i_b}\in\cB\setminus\cS$ which are different $(k-1)$-subsets, then $|B_{i_a}\cup B_{i_b}|\geq k$; if there exists two blocks $B_{i_a},B_{i_b}\in \cS$, then $|B_{i_a}\cup B_{i_b}|\geq k$ because of the minimum Hamming distance of the code $\cC$ is four.

Therefore, we only need to check the case when there is one block from $\cS$, and the other $h-1$ blocks are the same $(k-1)$-subset from $\cB\setminus\cS$. For example, $B_{i_1}\in\cS$, and $B_{i_2},\dots,B_{i_h}\in\cB\setminus\cS$ are the same $(k-1)$-subset. If $B_{i_1}$ is not a subset of $B_{i_2}$, then $|B_{i_1}\cup B_{i_2}|\geq k$; if $B_{i_1}$ is a subset of $B_{i_2}$, then by the construction $h\leq\left\lfloor\frac{k-1}{r}\right\rfloor$ and $|\cup_{j=1}^h B_{i_j}|=k-1\geq hr=\min\{hr,k\}$. Therefore, the multiset Hall's condition holds.
\end{proof}

The following lower bound of $A(n,4,w)$ is known.
\begin{lemma}[\cite{GS1980}]
\label{cwcd=4} $A(n,4,w)\geq\frac{1}{n}{n\choose w}$.
\end{lemma}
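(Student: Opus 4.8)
The plan is to establish this lower bound by the classical Graham--Sloane weighting argument, exhibiting a large $(n,4,w)$-code as a single class of a checksum partition of all weight-$w$ vectors. Identify the $n$ coordinates with the elements of $\Z_n$, and associate to each weight-$w$ vector its support $S\subseteq\Z_n$ with $|S|=w$. I would partition the family of all ${n\choose w}$ weight-$w$ subsets into $n$ classes indexed by $a\in\Z_n$, placing $S$ into class $\cP_a$ according to its checksum $\sigma(S)=\sum_{i\in S}i\bmod n$, i.e. $\cP_a=\{S:|S|=w,\ \sigma(S)=a\}$.

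The key step is to check that each class $\cP_a$ is the support collection of a constant weight code of minimum distance at least $4$. For two distinct weight-$w$ vectors with supports $S,S'$, the Hamming distance equals $2\left(w-|S\cap S'|\right)$, so minimum distance at least $4$ is equivalent to the condition $|S\cap S'|\le w-2$ for all distinct pairs. The only case to exclude is $|S\cap S'|=w-1$, in which $S=T\cup\{b\}$ and $S'=T\cup\{b'\}$ share a common $(w-1)$-set $T$ with $b\ne b'$. Then $\sigma(S)-\sigma(S')\equiv b-b'\pmod n$, and since $b,b'$ are distinct elements of $\Z_n$ this difference is nonzero modulo $n$; hence $S$ and $S'$ must lie in different classes. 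This is the crux: a single coordinate swap always perturbs the checksum by a nonzero amount mod $n$, so no two supports in one class can be at Hamming distance $2$, which forces every distinct pair within a class to distance at least $4$.

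Finally, I would invoke the pigeonhole principle. The $n$ classes $\cP_0,\dots,\cP_{n-1}$ partition all ${n\choose w}$ weight-$w$ subsets, so some class has size at least $\frac{1}{n}{n\choose w}$; by the previous paragraph this class is an $(n,4,w)$-code, which yields $A(n,4,w)\ge\frac{1}{n}{n\choose w}$. The argument is elementary throughout, and the only point requiring care is the distance bookkeeping in the second step---translating Hamming distance $4$ into the intersection bound $|S\cap S'|\le w-2$ and confirming that the checksum separates precisely the forbidden distance-$2$ pairs; I expect no substantive obstacle beyond this.
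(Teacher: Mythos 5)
Your proof is correct, and it is exactly the classical Graham--Sloane argument that the paper invokes by citation: the paper itself gives no proof of this lemma, quoting it directly from \cite{GS1980}. Your partition of the weight-$w$ supports by checksum $\sum_{i\in S} i \bmod n$, the observation that distance-$2$ pairs (single-element swaps) always land in different classes, and the pigeonhole step reproduce the original proof faithfully, so there is nothing to reconcile against the paper's text.
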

Next, we apply Construction~\ref{cons2} to get a family optimal codes.
\begin{cor}\label{cor:r=k-2}
For any $\left\lfloor\frac{k-1}{r}\right\rfloor{m\choose k-1}-(m-k+1)A(m,4,k-2)\leq n\leq \left\lfloor\frac{k-1}{r}\right\rfloor{m\choose k-1}$, $r\leq k-2$, $$N(n,k,m;r)=n(k-1)-\left\lfloor\frac{\left\lfloor\frac{k-1}{r}\right\rfloor{m\choose k-1}-n}{m-k+1}\right\rfloor.$$
\end{cor}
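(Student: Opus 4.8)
The plan is to prove the equality by matching a lower bound, which follows immediately from Theorem~\ref{bound}, against an upper bound, which comes from Construction~\ref{cons2} refined so as to realize \emph{every} value of $n$ in the stated interval rather than only those of the special form $\left\lfloor\frac{k-1}{r}\right\rfloor{m\choose k-1}-\alpha(m-k+1)$ produced by an integral number of swaps.

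For the lower bound I would simply specialize Theorem~\ref{bound} to $c=k-1$. Since ${m-(k-1)\choose k-1-(k-1)}={m-k+1\choose 0}=1$ and $k-c=1$, the general inequality collapses to
$$N(n,k,m;r)\geq n(k-1)-\left\lfloor\frac{\left\lfloor\frac{k-1}{r}\right\rfloor{m\choose k-1}-n}{m-k+1}\right\rfloor,$$
which is exactly the asserted value. This step uses $r\le k-1$, which is guaranteed by the hypothesis $r\le k-2$, so the entire burden of the corollary is to produce a code achieving this value as an upper bound.

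For the upper bound, write $D=\left\lfloor\frac{k-1}{r}\right\rfloor{m\choose k-1}-n$, so that $0\le D\le (m-k+1)A(m,4,k-2)$ by hypothesis, and set $\alpha=\lfloor D/(m-k+1)\rfloor$ and $\beta=D-\alpha(m-k+1)$, so that $0\le\beta<m-k+1$ and $\alpha\le A(m,4,k-2)$. I would first invoke Construction~\ref{cons2} with a subcode of $\alpha$ codewords of a maximum $(m,4,k-2)$-code, obtaining a valid dual set system $(X,\cB)$ with $\left\lfloor\frac{k-1}{r}\right\rfloor{m\choose k-1}-\alpha(m-k+1)$ blocks, exactly $\alpha$ of which have size $k-2$. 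Then I would delete $\beta$ further copies of $(k-1)$-blocks from $\cB$. This brings the number of blocks down to exactly $n$, keeps the number of size-$(k-2)$ blocks equal to $\alpha$, and hence gives storage $N=(k-1)n-\alpha=(k-1)n-\lfloor D/(m-k+1)\rfloor$, matching the lower bound.

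The only real obstacle is to confirm that deleting copies of $(k-1)$-blocks does not break the multiset Hall's condition from Theorem~\ref{mhallc}. This is immediate: that condition is universally quantified over sub-collections of blocks, so passing to a sub-collection can only remove constraints and never create a new violation. The remaining points are bookkeeping: the hypothesis $r\le k-2$ ensures every block (including the size-$(k-2)$ ones) still has size at least $r$, so the $h=1$ case holds; and one checks that Construction~\ref{cons2} with $\alpha$ codewords leaves $\left\lfloor\frac{k-1}{r}\right\rfloor{m\choose k-1}-\alpha(m-k+2)$ blocks of size $k-1$, which is at least $\beta<m-k+1$ in the relevant range, so the required deletions are actually available. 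Combining the two bounds yields the claimed equality.
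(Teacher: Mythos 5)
Your proposal is correct and takes essentially the same route as the paper: the lower bound by specializing Theorem~\ref{bound} to $c=k-1$, and the upper bound by applying Construction~\ref{cons2} with $\alpha=\left\lfloor\left(\left\lfloor\frac{k-1}{r}\right\rfloor{m\choose k-1}-n\right)/(m-k+1)\right\rfloor$ codewords and then deleting surplus $(k-1)$-blocks (your $\beta$ is exactly the paper's $n'-n$). Your additional checks---that deletion cannot violate the multiset Hall's condition and that enough $(k-1)$-blocks remain to delete---are details the paper leaves implicit, so nothing is missing.
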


\begin{proof}
When $r\leq k-2$, taking $c=k-1$ in Theorem~\ref{bound}, we have
$N\geq n(k-1)-\left\lfloor\frac{\left\lfloor\frac{k-1}{r}\right\rfloor{m\choose k-1}-n}{m-k+1}\right\rfloor$.

For any positive integers $n,m,k,r$ such that $r+2\leq k\leq m$ and $$\left\lfloor\frac{k-1}{r}\right\rfloor{m\choose k-1}-(m-k+1)A(m,4,k-2)\leq n\leq \left\lfloor\frac{k-1}{r}\right\rfloor{m\choose k-1},$$ we have that $0\leq \left\lfloor\frac{\left\lfloor\frac{k-1}{r}\right\rfloor{m\choose k-1}-n}{m-k+1}\right\rfloor\leq A(m,4,k-2)$. Let $\alpha=\left\lfloor\frac{\left\lfloor\frac{k-1}{r}\right\rfloor{m\choose k-1}-n}{m-k+1}\right\rfloor$. By Construction~\ref{cons2}, there exists an $(n',N',k,m;r)$-MCBC with $$n'=\left\lfloor\frac{k-1}{r}\right\rfloor{m\choose k-1}-\alpha(m-k+1)\text{ and }N'=n'(k-1)-\alpha.$$ Removing any $n'-n$ blocks of size $k-1$ from its dual set system, we get an optimal $(n,N,k,m;r)$-MCBC with $N=N'-(k-1)(n'-n)=n(k-1)-\alpha$.
\end{proof}

\subsection{A Construction for $m=k$}

In the following, we give a construction of $(n,N,k,k;r)$-MCBC and determine the value of $N(n,k,k;r)$ for $1\leq r\leq k$.
\begin{construction}
\label{cons.m=k}
Let $m=k$ and $X=[k],\cB=\{B_1,\ldots,B_n\}$ and $k=\alpha r+\beta$, where $\alpha\geq 1$ and $0\leq \beta\leq r-1$ such that the following holds.
\begin{enumerate}
\item[(i)] When $\beta=0$, for any $n\geq \alpha$, let $B_i=[(i-1)r+1,(i-1)r+r]$ for $i\in [\alpha]$, and $B_i=[k]$ for any $i\in[\alpha+1,n]$.
\item[(ii)] When $\beta>0$, for any $n\geq \alpha+r$, let $B_i=[(i-1)r+1,(i-1)r+r]$ for $i\in [\alpha]$, $B_i=[k]\setminus\{i-\alpha,i-\alpha+r,i-\alpha+2r,\dots,i-\alpha+(\alpha-1)r\}$ for $i\in[\alpha+1,\alpha+r]$, and $B_i=[k]$ for any $i\in[\alpha+r+1,n]$.
\end{enumerate}
\end{construction}
\begin{theorem}
The code $(X,\cB)^*$ from Construction~\ref{cons.m=k} is an $(n,N,k,k;r)$-MCBC with $N=kn-\left\lfloor\frac{k-1}{r}\right\rfloor k$.
\end{theorem}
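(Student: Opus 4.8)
The plan is to verify the two assertions of the theorem separately: first that the storage $N$ equals $kn-\lfloor\frac{k-1}{r}\rfloor k$, and second that the construction indeed yields a valid MCBC by checking the multiset Hall's condition from \Tref{mhallc}. For the storage count, I would simply sum the block sizes. In both cases ($\beta=0$ and $\beta>0$) the first $\alpha$ blocks have size $r$, and every ``full'' block $B_i=[k]$ has size $k$. When $\beta>0$ there are additionally $r$ blocks of size $k-\alpha$. A direct computation of $N=\sum_{i=1}^n |B_i|$ should collapse to $kn$ minus a correction term, and the main routine task is to confirm that this correction equals $\lfloor\frac{k-1}{r}\rfloor k = \alpha k$ (using that $\lfloor\frac{k-1}{r}\rfloor=\alpha$ when $\beta=0$, and examining the $\beta>0$ case via $\alpha r+\beta=k$).

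The substantive part is checking the multiset Hall's condition: for every $h\in[\lceil k/r\rceil]$ and any $h$ distinct blocks $B_{i_1},\dots,B_{i_h}$, one needs $|\bigcup_{j=1}^h B_{i_j}|\geq \min\{hr,k\}$. I would organize this by cases on how many of the chosen blocks are ``full'' blocks (equal to $[k]$) versus the special structured blocks of the first layer. If any chosen block is full, the union is all of $[k]$ and the condition holds immediately. So the real work is bounding the union of $h$ distinct structured blocks. In the $\beta=0$ case the first $\alpha$ blocks $B_i=[(i-1)r+1,(i-1)r+r]$ are pairwise disjoint intervals of size $r$, so any $h$ of them have union exactly $hr$, which meets $\min\{hr,k\}$ with equality; since there are only $\alpha=k/r=\lceil k/r\rceil$ of them, this exhausts all relevant $h$.

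The delicate case, which I expect to be the main obstacle, is $\beta>0$, where the second layer consists of the $r$ blocks $B_i=[k]\setminus\{i-\alpha,i-\alpha+r,\dots,i-\alpha+(\alpha-1)r\}$ for $i\in[\alpha+1,\alpha+r]$. Each such block omits an arithmetic-progression-style set of $\alpha$ points, and I would need to show that taking a union of some first-layer intervals together with some second-layer ``near-complete'' blocks still reaches the required size. The key observations I would exploit are: (a) any single second-layer block already has size $k-\alpha\geq r$ (valid since $h=1$ needs only $\geq r$), and (b) two distinct second-layer blocks omit \emph{disjoint} progressions, so their union misses nothing and equals $[k]$. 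Thus once two or more second-layer blocks are chosen the union is all of $[k]$. The genuinely tight subcase is one second-layer block combined with $h-1$ first-layer intervals, where I must verify $|\bigcup B_{i_j}|\geq\min\{hr,k\}$; here I would argue that the single second-layer block together with the disjoint first-layer intervals either already spans $[k]$ or covers enough coordinates, using the interleaving of the omitted progression with the interval partition to count precisely how many of the $k$ points are hit. Finally, I would note via \Tref{mhallc} and the combinatorial count above that the total number of first-plus-second layer blocks equals $\lceil k/r\rceil=\alpha+1$, so the range of $h$ is correctly covered, and the remaining blocks being full is harmless, completing the verification.
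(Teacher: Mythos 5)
Your case structure mirrors the paper's proof exactly (a full block forces union $[k]$; first-layer intervals are disjoint; two second-layer blocks omit disjoint residue classes, so their union is $[k]$; the tight case is one second-layer block plus first-layer intervals), but the proposal as written has one concrete error and one genuine gap. The error is in the storage count: your parenthetical gets the floor backwards. When $\beta=0$ we have $\left\lfloor\frac{k-1}{r}\right\rfloor=\alpha-1$, not $\alpha$, and the correction term there is $(\alpha-1)k$: indeed $N=\alpha r+(n-\alpha)k=kn-\alpha(k-r)=kn-(\alpha-1)k$ using $k=\alpha r$. It is only when $\beta>0$ that $\left\lfloor\frac{k-1}{r}\right\rfloor=\alpha$ and the correction equals $\alpha k$. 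So your ``routine'' verification, carried out with the values you state, would not close in the $\beta=0$ case.

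The gap is that the subcase you yourself identify as the main obstacle is left unproved. The count there is exact and you should make it: a second-layer block $B_{i_h}$ misses exactly one point of each first-layer interval, so each of the $h-1$ distinct first-layer blocks contributes exactly one new point, giving $|\cup_{j=1}^h B_{i_j}|=(k-\alpha)+(h-1)$. For $h=\alpha+1$ this equals $k$, but for $2\le h\le\alpha$ you must still prove $(k-\alpha)+(h-1)\ge hr$, and this is precisely where the hypotheses enter: substituting $k=\alpha r+\beta$ gives $(k-\alpha)+(h-1)-hr=(\alpha-h)(r-1)+(\beta-1)\ge 0$, which needs both $h\le\alpha$ and $\beta\ge 1$. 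The same identity at $h=1$ gives $k-\alpha-r=(\alpha-1)(r-1)+(\beta-1)\ge 0$, i.e.\ your observation (a) that $|B_{i_h}|=k-\alpha\ge r$, which you assert but never verify. Without these inequalities the multiset Hall condition is not established; with them your argument coincides with the paper's. Finally, your closing claim that ``the total number of first-plus-second layer blocks equals $\lceil k/r\rceil=\alpha+1$'' is false (there are $\alpha+r$ such blocks when $\beta>0$); the correct statement is simply that the multiset Hall condition only requires checking $h\le\lceil k/r\rceil=\alpha+1$, and in the mixed subcase at most $\alpha+1$ distinct blocks can occur anyway (one second-layer plus at most $\alpha$ first-layer).
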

\begin{proof}
We show that in both cases, i.e. $\beta>0$ and $\beta=0$, the multiset Hall's condition holds.\\
(i) $\beta=0$. For any $h\geq 1$ different blocks $B_{i_1},B_{i_2},\dots,B_{i_h}\in\cB$, if there exists some block $B_{i_j}=[k]$, then $|\cup_{j=1}^h B_{i_j}|=k\geq r$; otherwise, $i_j\in[\alpha]$ for all $j\in[h]$, and then $|\cup_{j=1}^h B_{i_j}|=hr$ since by the construction the blocks $B_1,\ldots,B_\alpha$ are mutually disjoint. Thus, the multiset Hall's condition holds, and it is an $(n,N,k,k;r)$-MCBC with
$$N=\alpha r + (n-\alpha)k= kn-\frac{k}{r}(k-r)=kn-\left\lfloor\frac{k-1}{r}\right\rfloor k.$$

(ii) $\beta>0$. First, note that for any $i\in[\alpha+1,\alpha+r]$, $|B_i|=k-\alpha\geq r$. This holds since $k-\alpha-r=\alpha r+\beta-\alpha-r=(\alpha-1)(r-1)+(\beta-1)\geq 0$. For any $h$ different blocks $B_{i_1},B_{i_2},\dots,B_{i_h}\in \cB$ with $h\geq 2$, if there exists some $j\in[h]$ such that $B_{i_j}=[k]$ or for all $j\in[h]$, $i_j\in[\alpha]$, then the proof is similar as in case (i). If there exist two blocks $B_{i_a},B_{i_b}$ such that $i_a,i_b\in[\alpha+1,\alpha+r]$, then $|B_{i_a}\cup B_{i_b}|=k$. Therefore, the remaining case to check is when only one block is from the set $\{B_i:i\in[\alpha+1,\alpha+r]\}$, and the other blocks are from the set $\{B_i:i\in[\alpha]\}$.
Without loss of generality assume that $i_1,i_2,\dots,i_{h-1}\in[\alpha]$ and $i_h\in[\alpha+1,\alpha+r]$. Since $|B_{i_h}|=k-\alpha$, when $1\leq h\leq \alpha-1$, by the construction we get that $|\cup_{j=1}^h B_{i_j}|\geq k-\alpha+h-1$. Since $$k-\alpha+h-1-hr=\alpha r+\beta-\alpha+h-1-hr=(\alpha-h)(r-1)+(\beta-1)\geq 0,$$ we conclude that $|\cup_{j=1}^h B_{i_j}|\geq hr$. If $h=\alpha$, by the construction we get that $|\cup_{j=1}^h B_{i_j}|=k-1=\alpha r+\beta-1\geq \alpha r$. Lastly, if $h=\alpha+1$, then $|\cup_{j=1}^h B_{i_j}|=k$. Therefore, the multiset Hall's condition is satisfied for any $1\leq h\leq \left\lceil\frac{k}{r}\right\rceil$, and the code is an $(n,N,k,k;r)$-MCBC with
$$N=kn-\alpha (k-r)-\alpha r=kn-\alpha k=kn-\left\lfloor\frac{k-1}{r}\right\rfloor k.$$
\end{proof}

The next corollary summarizes the construction and results in this section.
\begin{cor}
\label{k=m} $N(n,k,k;r)=kn-\left\lfloor\frac{k-1}{r}\right\rfloor k$ if $r\mid k$, $n\geq \frac{k}{r}$ or $r\nmid k$, $n\geq \lfloor\frac{k}{r}\rfloor+r$.
\end{cor}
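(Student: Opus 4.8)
The plan is to match the upper bound furnished by Construction~\ref{cons.m=k} with a lower bound obtained from \Lref{b.sum}, exactly as in \Cref{b.largen} but specialized to $m=k$, where the governing binomial coefficient collapses. The upper bound is already in hand: the correctness theorem for Construction~\ref{cons.m=k} produces, under precisely the hypotheses in the statement (namely $r\mid k$ with $n\ge k/r$, or $r\nmid k$ with $n\ge\lfloor k/r\rfloor+r$), an $(n,N,k,k;r)$-MCBC with $N=kn-\lfloor\frac{k-1}{r}\rfloor k$, so $N(n,k,k;r)\le kn-\lfloor\frac{k-1}{r}\rfloor k$. It therefore remains to prove the reverse inequality.

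For the lower bound I would first dispose of the boundary case $r=k$: here $\lfloor\frac{k-1}{r}\rfloor=0$ and \Lref{trivial_bound}~(iii) gives $N(n,k,k;k)=kn$, which is exactly the claimed value. So assume $r\le k-1$ and let $(X,\cB)^\ast$ be any $(n,N,k,k;r)$-MCBC with block-size counts $A_i$. The key observation is that setting $m=k$ in \Lref{b.sum} makes the weights explicit: for $r\le i\le k-1$ one has $\binom{k-i}{k-1-i}=\binom{k-i}{1}=k-i$ and $\binom{k}{k-1}=k$, so the lemma reads
\[
\sum_{i=r}^{k-1}(k-i)A_i\ \le\ \left\lfloor\frac{k-1}{r}\right\rfloor k .
\]
Unlike the general-$m$ argument in \Cref{b.largen}, no appeal to \Lref{ineq} is needed here, since the bound ${m-i\choose k-1-i}\ge k-i$ holds with equality when $m=k$.

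Finally I would convert this into a bound on $N$. Writing $n=\sum_{i=r}^k A_i$ and $N=\sum_{i=r}^k iA_i$, and noting that the $i=k$ summand contributes nothing to $\sum(k-i)A_i$, we obtain
\[
N=\sum_{i=r}^k\bigl(k-(k-i)\bigr)A_i=kn-\sum_{i=r}^{k-1}(k-i)A_i\ \ge\ kn-\left\lfloor\frac{k-1}{r}\right\rfloor k ,
\]
so $N(n,k,k;r)\ge kn-\lfloor\frac{k-1}{r}\rfloor k$. Combined with the construction, this yields the asserted equality.

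I do not expect a genuine obstacle here: the only points requiring care are the binomial simplification at $m=k$ and the observation that this lower bound is in fact \emph{unconditional} in $n$ (for $r\le k-1$). Consequently the hypotheses on $n$ in the statement are exactly those needed to realize the matching construction, not the lower bound, and the two halves of the proof fit together cleanly once this is made explicit.
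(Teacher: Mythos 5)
Your proposal is correct and follows essentially the same route as the paper: specialize \Lref{b.sum} to $m=k$ (where $\binom{m-i}{k-1-i}$ collapses to $k-i$), derive $N\ge kn-\left\lfloor\frac{k-1}{r}\right\rfloor k$ exactly as in \Cref{b.largen}, and match it with Construction~\ref{cons.m=k}. Your explicit handling of the boundary case $r=k$ via \Lref{trivial_bound}~(iii) and your remark that the lower bound is unconditional in $n$ are minor refinements the paper leaves implicit, but they do not change the argument.
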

\begin{proof}
Taking $m=k$ in Lemma~\ref{b.sum}, we have that $\sum_{i=r}^{k-1}(k-i)A_i\leq \left\lfloor\frac{k-1}{r}\right\rfloor k$. Similarly to the proof of Corollary~\ref{b.largen}, we get that
\begin{align*}
N(n,k,k;r) & =\sum_{i=r}^k iA_i = \sum_{i=r}^k (k-(k-i))A_i= \sum_{i=r}^k kA_i - \sum_{i=r}^k (k-i)A_i \\
 & =kn-\sum_{i=r}^{k-1}(k-i)A_i\geq kn-\left\lfloor\frac{k-1}{r}\right\rfloor k.
\end{align*}
Hence we conclude that $N(n,k,k;r)=kn-\left\lfloor\frac{k-1}{r}\right\rfloor k$ since Construction~\ref{cons.m=k} gives optimal codes that reach this bound.
\end{proof}

\subsection{A Construction from Steiner Systems}

In the following we construct a class of MCBCs based upon Steiner systems, which is a generalization of Example~\ref{e.ap4}.

A \emph{Steiner system} $S(2,\ell,m)$ is a set system $(X,\cB)$, where $X$ is a set of $m$ points, $\cB$ is a collection of $\ell$-subsets (blocks) of $X$, such that each pair of points in $X$ occurs together in exactly one block of~$\cB$. By the well known Fisher's inequality \cite{MR2007}, for an $S(2,\ell,m)$ with $m>\ell\geq 2$, $|\cB|\geq m$. For the existence of Steiner systems, we refer the reader to~\cite{CM2007}.

\begin{theorem}
\label{c.steiner} Let $(X,\cB)$ be an $S(2,\ell,m)$ with $m> \ell$. Then $(X,\cB)^*$ is a $(|\cB|,\ell|\cB|,k,m;r)$-MCBC for any $\lfloor\frac{\ell}{2}\rfloor+1\leq r\leq \ell$ and $k\leq (\ell-r+1)(2r-1)$.
\end{theorem}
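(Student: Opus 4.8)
The plan is to invoke the multiset Hall's condition of Theorem~\ref{mhallc} directly on the Steiner system $(X,\cB)$, which here plays the role of the dual set system: the $m$ points of $X$ are the servers and the $|\cB|$ blocks are the items, with $B_i$ recording the servers that store item $i$. Since every block of an $S(2,\ell,m)$ has size $\ell$, the system automatically has $n=|\cB|$ items and total storage $N=\sum_i|B_i|=\ell|\cB|$, matching the claimed parameters. So it suffices to show that for every $h\in[\lceil k/r\rceil]$ and every choice of $h$ distinct blocks $B_{i_1},\dots,B_{i_h}$, one has $|\cup_{j=1}^h B_{i_j}|\ge\min\{hr,k\}$.

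The single structural fact I would use is that any two distinct blocks of an $S(2,\ell,m)$ meet in at most one point (otherwise a pair of points would lie in two blocks, violating the defining property). Feeding $|B_{i_a}\cap B_{i_b}|\le 1$ into the first Bonferroni inequality yields the uniform lower bound
\[
\Big|\bigcup_{j=1}^h B_{i_j}\Big|\ \ge\ h\ell-\binom{h}{2}\ =:\ g(h),
\]
so the whole theorem reduces to the purely arithmetic claim that $g(h)\ge\min\{hr,k\}$ for all $1\le h\le\lceil k/r\rceil$. A short computation shows $g(h)\ge hr$ is equivalent to $h\le 2(\ell-r)+1$, and the hypothesis $r\ge\lfloor\ell/2\rfloor+1$ (i.e.\ $r>\ell/2$) guarantees $2(\ell-r)+1\le\ell$; hence $g$ is increasing over the entire range of $h$ we care about, and the sequential ``lose at most $h-1$ points'' reading of $g$ is never wasteful.

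I would then split on $h$. For $h\le\lfloor k/r\rfloor$ the target is $hr$; here I would first deduce $\lfloor k/r\rfloor\le 2(\ell-r)+1$ from $k\le(\ell-r+1)(2r-1)$, using the identity $r\bigl(2(\ell-r)+2\bigr)-(\ell-r+1)(2r-1)=\ell-r+1\ge 1$, which forces $k/r<2(\ell-r)+2$ and hence $\lfloor k/r\rfloor\le 2(\ell-r)+1$; this gives $g(h)\ge hr=\min\{hr,k\}$. The only remaining value is $h=\lceil k/r\rceil=\lfloor k/r\rfloor+1$ when $r\nmid k$, where the target is $k$. If $h\le 2(\ell-r)+1$ then already $g(h)\ge hr>k$; the genuinely delicate sub-case is the extremal $h=2(\ell-r+1)$, where one evaluates $g(h)=\tfrac{h(2\ell-h+1)}{2}$ and finds it collapses exactly to $(\ell-r+1)(2r-1)$, which is $\ge k$ by hypothesis.

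The main obstacle, and the place where the precise form of the bound $k\le(\ell-r+1)(2r-1)$ is forced, is this last boundary case $h=2(\ell-r+1)$: the union estimate is tight there (for instance at $(\ell,r,k)=(4,3,10)$ on the affine plane of order $4$, the constraint binds exactly at $h=4$), so there is no slack and the identity $g\bigl(2(\ell-r+1)\bigr)=(\ell-r+1)(2r-1)$ must be verified exactly rather than bounded crudely. Everything else is monotonicity bookkeeping, and the lower bound $r\ge\lfloor\ell/2\rfloor+1$ is exactly what keeps every relevant $h$ at or below $\ell+1$, so that the elementary Bonferroni bound remains the honest, increasing estimate throughout.
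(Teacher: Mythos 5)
Your proposal is correct and follows essentially the same route as the paper's own proof: both rest on the fact that distinct blocks of an $S(2,\ell,m)$ meet in at most one point, the resulting union bound $\bigl|\cup_{j=1}^h B_{i_j}\bigr|\ge h\ell-\binom{h}{2}$, the observation that $r\ge\lfloor\ell/2\rfloor+1$ confines all relevant $h$ to $[2(\ell-r)+2]\subseteq[\ell+1]$, and the exact evaluation $g\bigl(2(\ell-r)+2\bigr)=(\ell-r+1)(2r-1)$ at the boundary. The only cosmetic difference is that you carry a general $k\le(\ell-r+1)(2r-1)$ through the argument, whereas the paper first reduces to the extremal $k$ via Remark~\ref{basic}~(ii); your arithmetic (including the identity $r\bigl(2(\ell-r)+2\bigr)-(\ell-r+1)(2r-1)=\ell-r+1$) checks out.
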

\begin{proof}
By Remark~\ref{basic} (ii), we only need to check that $(X,\cB)^*$ is a $(|\cB|,\ell|\cB|,k,m;r)$-MCBC for any $\lfloor\frac{\ell}{2}\rfloor+1\leq r\leq \ell$ and $k=(\ell-r+1)(2r-1)$.

Let us first determine the number of points in the union of any $h$ blocks in $\cB$. Since any two blocks intersect in at most one point, for any $i\in[2,h]$, if the first $i-1$ blocks are chosen, then the $i$-th block can contribute at least $\ell-(i-1)$ new points. Therefore, if $h\leq \ell+1\leq |\cB|$, the union of any $h$ blocks contains at least
$$\ell+(\ell-1)+\cdots+(\ell-(h-1))=h\ell-{h\choose 2}$$ points.

Let us consider some $h$ blocks $B_{i_1},B_{i_2},\dots,B_{i_h}$ with
$$1\leq h\leq \left\lceil\frac{k}{r}\right\rceil=\left\lceil\frac{(\ell-r+1)(2r-1)}{r}\right\rceil\leq \left\lceil 2(\ell-r+1)-\frac{l+1}{r}+1\right\rceil\leq 2(\ell-r)+2,$$
where the last inequality holds since $r\geq\lfloor\frac{\ell}{2}\rfloor+1$. We see $2(\ell-r)+2\leq \ell+1$ for $r\geq\lfloor\frac{\ell}{2}\rfloor+1$.

If $h\in[2(\ell-r)+1]$, then $r\leq \ell-\frac{h-1}{2}$, and
$$|\cup_{j=1}^h B_{i_j}|\geq h\ell-{h\choose 2}=h\left(\ell-\frac{h-1}{2}\right)\geq hr=\min\{hr,k\}.$$
If $h=2(\ell-r)+2$, then $|\cup_{j=1}^h B_{i_j}|\geq \ell h-{h\choose 2}=(\ell-r+1)(2r-1)=k$. Therefore, the multiset Hall's condition holds for any $1\leq h\leq \lceil\frac{k}{r}\rceil$.
\end{proof}

An \emph{affine plane} of order $q$ is an $S(2,q,q^2)$. It has $q^2$ points and $q^2+q$ blocks. It is well known that an affine plane exists for any prime power $q$ \cite{CM2007}. The next result of CBCs based upon affine planes was given in~\cite{SG13}.


\begin{theorem}[\cite{SG13}]\label{cons.ap}
Let $q$ be a prime power and $(X,\cB)$ be an affine plane of order $q$. Then $(X,\cB)^*$ is an optimal uniform $(q^2+q,q^3+q^2,q^2,q^2)$-CBC.
\end{theorem}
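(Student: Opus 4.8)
The plan is to establish the three assertions separately: that $(X,\cB)^\ast$ has the stated parameters, that it is uniform, and that it is optimal. First I would record the easy structural facts. Since $(X,\cB)$ is an $S(2,q,q^2)$, it has $m=q^2$ points and $n=|\cB|=q^2+q$ blocks, each of size $q$, and each point lies on exactly $q+1$ blocks. Passing to the dual $(X,\cB)^\ast=(V,\cC)$, the servers are the $q^2$ points and the $q^2+q$ items are the lines, where item (line) $B$ is stored precisely in the $q$ servers (points) it contains. Hence every item is replicated the same number $q$ of times, which gives uniformity, and $N=q(q^2+q)=q^3+q^2$ with $k=q^2$.

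The substantive step is verifying that $(X,\cB)^\ast$ is a CBC, i.e.\ by Theorem~\ref{hallc} that any $h$ distinct lines with $1\le h\le k=q^2$ cover at least $h$ points. I would prove this uniformly in $h$ by a complementary counting argument. Fix $h$ lines and let $W$ be the set of points missed by all of them, with $w=|W|$; the union of the chosen lines then has exactly $u=q^2-w$ points, so it suffices to show $h\le q^2-w$. Every chosen line is disjoint from $W$, so $h$ is at most the number of lines disjoint from $W$, and the whole matter reduces to the claim that \emph{any nonempty set $W$ of $w$ points meets at least $q+w$ lines}.

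To prove this claim I would let $L_1,\dots,L_t$ be the lines meeting $W$ and set $m_j=|L_j\cap W|\ge 1$. Double counting point--line incidences gives $\sum_j m_j=w(q+1)$ (each of the $w$ points lies on $q+1$ lines), while counting collinear pairs inside $W$ gives $\sum_j\binom{m_j}{2}=\binom{w}{2}$ (each pair lies on a unique block, as $q\ge 2$). These combine to $\sum_j m_j^2=w(w+q)$, so Cauchy--Schwarz $\big(\sum_j m_j\big)^2\le t\sum_j m_j^2$ yields $t\ge \frac{w(q+1)^2}{w+q}$. The identity $w(q+1)^2-(q+w)^2=(w-1)(q^2-w)\ge 0$, valid since $1\le w\le q^2$, then forces $t\ge q+w$. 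Consequently the number of lines disjoint from $W$ is at most $(q^2+q)-(q+w)=q^2-w=u$, so $h\le u$; and when $w=0$ we have $u=q^2\ge h$ trivially. Thus Hall's condition holds for all $h\le q^2$.

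Finally, optimality I would obtain directly from the known $m=k$ bound. Here $m=k=q^2$, so Theorem~\ref{CBC}(i), equivalently Corollary~\ref{k=m} with $r=1$ (applicable since $n=q^2+q\ge k$), gives $N(q^2+q,q^2,q^2)=q^2(q^2+q)-q^2(q^2-1)=q^3+q^2$, which the construction attains. The hard part is the Hall verification across the \emph{entire} range $h\le q^2$: the naive greedy estimate ``the $i$-th line adds at least $q-(i-1)$ new points,'' used for Steiner systems in Theorem~\ref{c.steiner}, only controls $h\le q+1$ and becomes vacuous for larger $h$. The complement-plus-Cauchy--Schwarz argument above is precisely what handles the delicate large-$h$ regime, where the union is forced to be nearly all of the plane.
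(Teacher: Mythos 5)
Your proposal is correct, but it is worth pointing out that the paper itself never proves this statement: Theorem~\ref{cons.ap} is imported wholesale from~\cite{SG13}, and the only justification given in-paper is the follow-up remark that optimality is consistent with Theorem~\ref{CBC}(i). So your write-up supplies precisely the content the paper delegates to the citation, and it is sound. I verified the core computation: with $m_j=\abs{L_j\cap W}$, the incidence count $\sum_j m_j=w(q+1)$ and the pair count $\sum_j\binom{m_j}{2}=\binom{w}{2}$ (each pair of points of $W$ lies on a unique line) give $\sum_j m_j^2=w(w+q)$; Cauchy--Schwarz then yields $t\ge w(q+1)^2/(w+q)$, and the factorization $w(q+1)^2-(q+w)^2=(w-1)(q^2-w)\ge 0$ holds for $1\le w\le q^2$, so at most $q^2-w$ lines avoid $W$, which is exactly Hall's condition via Theorem~\ref{hallc}; the $w=0$ case is trivial since $h\le k=q^2$. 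The parameter and uniformity checks are routine and correct, and the optimality step ($N(q^2+q,q^2,q^2)=q^3+q^2$ by Theorem~\ref{CBC}(i), equivalently Corollary~\ref{k=m} with $r=1$, using $n=q^2+q\ge q^2$) matches the paper's own remark; attaining the unrestricted lower bound in particular gives optimality within the uniform subclass. Your closing observation is also apt: the greedy estimate $h\ell-\binom{h}{2}$ underlying Theorem~\ref{c.steiner} is vacuous beyond $h\approx 2\ell$, which is exactly why that theorem is confined to $r\ge\lfloor\ell/2\rfloor+1$ and cannot reach $k=q^2$; your complementary-counting argument is what closes that gap. A further point in favor of your route: it uses only the $S(2,q,q^2)$ incidence structure (no appeal to the parallel classes or other special structure of affine planes), so the same double counting transfers verbatim to duals of other Steiner systems $S(2,\ell,m)$, giving Hall-type covering bounds there as well.
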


The code in Theorem~\ref{cons.ap} is also an optimal CBC, since for $k=m$ by Theorem~\ref{CBC} (i), $N(q^2+q,q^2,q^2)=q^3+q^2$. However, note that it is a different code from the optimal $(n,N,k,k)$-CBC in \cite{PSW09} which is constructed as follows: Let $X=[k]$, and $\cB=\{B_1,\ldots,B_n\}$, which are given by $B_i=\{i\}$ for $i\in[k]$, and $B_i=[k]$ for $i\in[k+1,n]$. By Theorem~\ref{c.steiner}, we can see the code in Theorem~\ref{cons.ap} is also $(q^2+q,q^3+q^2,k,q^2;r)$-MCBCs for different pair-values of $k$ and $r$.

\begin{cor}
\label{c.ap} Let $q$ be a prime power. Then there exists a $(q^2+q,q^3+q^2,k,q^2;r)$-MCBC for any $\lfloor\frac{q}{2}\rfloor+1\leq r\leq q$ and $k\leq (q-r+1)(2r-1)$.
\end{cor}

When $r=q$, we also receive an optimal $(q^2+q,q^3+q^2,2q-1,q^2;q)$-MCBC, since it reaches the bound in Lemma~\ref{trivial_bound} (i) with total storage $N=rn=q(q^2+q)=q^3+q^2$. Note that this code could also be obtained by Construction~\ref{cons.smalln} using $(q^2,2q-2,q)$-codes. The existence of $(q^2,2q-2,q)$-codes follows from affine planes as follows. For any block $B\in\cB$, we get a codeword ${\sf u}$ of length $q^2$ in which the value of each coordinate ${\sf u}_i$ for $i\in[q^2]$ is $1$ if and only if $i\in B$. Since any two blocks intersect in at most one points, the distance between every two distinct codewords is at least $2(q-1)$. Lastly, we note that it is possible to improve the value of $k$ when $r\leq\lfloor\frac{q}{2}\rfloor$.
We demonstrate this in the following example.
\begin{example}
Let $q=4$. From Corollary~\ref{c.ap}, we obtain a $(20,80,k,16;r)$-MCBC for $(k,r)\in\{(10,3),(7,4)\}$. From the incidence matrix in Example~\ref{e.ap4}, the lower bounds on the size of the union of any $h$ blocks, $1\leq h\leq 6$, are shown in the following table:
\begin{center}
\begin{tabular}{|c|c|}
  \hline
  $h$ & size of union \\
  \hline
  1 & 4 \\
  2 & 7 \\
  3 & 9 \\
  4 & 10 \\
  5 & 10 \\
  6 & 11 \\
  \hline
\end{tabular}
\end{center}
Therefore, we get also a $(20,80,11,16;2)$-MCBC.
\end{example}

\section{Regular MCBCs}\label{sec:regular_mcbc}

In this section, we study regular MCBCs, and give a construction for such codes. Given $n,m,k,r$, let $\mu(n,k,m;r)$ denote the smallest number of items stored in each server in a regular MCBC. The following lemma presents a simple lower bound on the value of $\mu(n,k,m;r)$.
\begin{lemma}
\label{bound_reg} $\mu(n,k,m;r)\geq\left\lceil\frac{N(n,k,m;r)}{m}\right\rceil$.
\end{lemma}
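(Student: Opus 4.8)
The plan is to establish the inequality $\mu(n,k,m;r)\geq\left\lceil\frac{N(n,k,m;r)}{m}\right\rceil$ by relating the two storage notions through the definition of regularity. First I would recall that a regular $(n,N,k,m;r)$-MCBC is, by definition, an MCBC in which every server stores exactly the same number $\mu$ of items, so that its total storage equals $N = m\mu$. Since $\mu(n,k,m;r)$ is achieved by some optimal regular MCBC, taking such a code and reading off its total storage gives an $(n,N,k,m;r)$-MCBC (not necessarily regular, but still a valid MCBC) with $N = m\,\mu(n,k,m;r)$.

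The key observation is that $N(n,k,m;r)$ is the \emph{minimum} total storage over \emph{all} $(n,N,k,m;r)$-MCBCs, whereas the regular optimum $\mu(n,k,m;r)$ is constrained to the smaller family of regular codes. Hence any valid regular MCBC provides an upper bound witness for $N(n,k,m;r)$: explicitly, $N(n,k,m;r)\leq m\,\mu(n,k,m;r)$, because a regular MCBC achieving $\mu(n,k,m;r)$ is in particular an MCBC with total storage $m\,\mu(n,k,m;r)$. Rearranging this inequality yields $\mu(n,k,m;r)\geq \frac{N(n,k,m;r)}{m}$.

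Finally, since $\mu(n,k,m;r)$ is by definition an integer (it counts the number of items stored in each server), I would take the ceiling of the right-hand side to obtain the stated bound $\mu(n,k,m;r)\geq\left\lceil\frac{N(n,k,m;r)}{m}\right\rceil$. The only subtle point worth stating carefully is that a regular MCBC is genuinely an MCBC, so that its total storage is a legitimate competitor in the minimization defining $N(n,k,m;r)$; once this is noted, the argument is essentially a one-line comparison followed by integrality. I do not expect any real obstacle here: the proof is a direct consequence of the definitions of $N(n,k,m;r)$, $\mu(n,k,m;r)$, and regularity, with the ceiling arising solely because $\mu$ is integer-valued.
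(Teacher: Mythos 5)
Your proof is correct and follows essentially the same argument as the paper: a regular MCBC achieving $\mu(n,k,m;r)$ is in particular an $(n,N,k,m;r)$-MCBC, so $m\,\mu(n,k,m;r)\geq N(n,k,m;r)$, and the ceiling follows from the integrality of $\mu$. You merely spell out explicitly (the minimization defining $N$, and the integrality step) what the paper leaves implicit.
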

\begin{proof}
This property holds since a regular $(n,N,k,m;r)$-MCBC is also an $(n,N,k,m;r)$-MCBC, and therefore $m\mu(n,k,m;r)\geq N(n,k,m;r)$.
\end{proof}

\begin{remark}
It is easy to check that the constructions of MCBCs in Section~\ref{sec:construction} also give regular MCBCs for some specific parameters. Moreover, when the MCBCs are optimal, the bound in Lemma~\ref{bound_reg} holds with equality.
\end{remark}

For $r=k$, we determine when regular MCBCs with minimum storage $kn$ exist.
\begin{construction}
\label{reg_cons} Let $n=\frac{m}{\gcd{(m,k)}}$ and $k\leq m$, then we have $\frac{nk}{m}=\frac{k}{\gcd{(m,k)}}$ and $m|nk$. Let $I=[0,nk-1]\subseteq \mathbb{Z}$, and for each $i\in[n]$, $I^{(i)}=[(i-1)k,ik-1]$. Then let $X=[m]$ and $\cB=\{B_1,\dots,B_n\}$, where $B_i=\{j\pmod{m}+1:j\in I^{(i)}\}$ for $i\in[n]$.
\end{construction}
\begin{theorem}
The code $(X,\cB)^*$ from Construction~\ref{reg_cons} is a regular $(n,k,m;k)$-MCBC.
\end{theorem}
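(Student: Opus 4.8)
The plan is to show that the dual set system $(X,\cB)$ from Construction~\ref{reg_cons} satisfies the multiset Hall's condition for $r=k$, namely that for all $h\in\left[\lceil\frac{k}{k}\rceil\right]=[1]$ and any single block $B_i\in\cB$, we have $|B_i|\geq\min\{k,k\}=k$. Since $\lceil k/k\rceil=1$, the multiset Hall's condition collapses to the single requirement that \emph{every block has size exactly $k$}, together with the regularity claim that each point of $X=[m]$ lies in the same number of blocks. So the whole proof reduces to two bookkeeping facts about the cyclic assignment $B_i=\{j\bmod m+1:j\in I^{(i)}\}$, where the index windows $I^{(i)}=[(i-1)k,ik-1]$ tile $I=[0,nk-1]$ into $n$ consecutive blocks of length $k$.

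First I would verify $|B_i|=k$ for each $i\in[n]$. The window $I^{(i)}$ contains exactly $k$ consecutive integers, and these map to $X$ via $j\mapsto(j\bmod m)+1$. Because $k\le m$, a run of $k$ consecutive integers hits $k$ distinct residues modulo $m$ (no residue can repeat within a stretch shorter than the modulus), so $B_i$ is genuinely a $k$-subset and not a multiset collapse. This immediately gives the multiset Hall's condition for $r=k$ by the observation above, so by Theorem~\ref{mhallc} the code is an $(n,N,k,m;k)$-MCBC; and since every block has size $k$, the total storage is $N=\sum_{i=1}^n|B_i|=nk$, matching the optimal value $N(n,k,m;k)=kn$ from Lemma~\ref{trivial_bound}~(iii).

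The one genuinely substantive step is regularity: I must show each point $x\in X$ appears in the same number of blocks, equivalently that each residue class modulo $m$ is covered equally often as $j$ ranges over all of $I=[0,nk-1]$. Here is where the hypothesis $n=\frac{m}{\gcd(m,k)}$ enters. Since the windows $I^{(1)},\dots,I^{(n)}$ partition $I$, the multiset union $\biguplus_i B_i$ (counting multiplicity) corresponds exactly to the residues $\{j\bmod m:j\in[0,nk-1]\}$ with multiplicity. As $j$ runs through $nk$ consecutive integers, each residue modulo $m$ is hit exactly $nk/m$ times \emph{provided} $m\mid nk$, which indeed holds because $nk=\frac{mk}{\gcd(m,k)}=m\cdot\frac{k}{\gcd(m,k)}$. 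Thus every point of $X$ lies in exactly $nk/m=\frac{k}{\gcd(m,k)}$ blocks, so in the primal system $(V,\cC)=(X,\cB)^*$ every server stores the same number of items, namely $\mu=N/m=nk/m$. The main obstacle, such as it is, lies in cleanly separating two different "equal covering" statements: the size-of-each-block count (a property of each individual window, using $k\le m$) versus the equal-multiplicity-of-each-point count (a global property of the union over all windows, using $m\mid nk$). I would state each as a short displayed equation, being careful that the cyclic relabeling $j\mapsto(j\bmod m)+1$ does not cause an off-by-one in the residue count, and then conclude that $(X,\cB)^*$ is a regular $(n,k,m;k)$-MCBC.
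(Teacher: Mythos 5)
Your proof is correct and follows essentially the same route as the paper's: both verify that each block $B_i$ consists of $k$ distinct residues (using $k\le m$ and the fact that $I^{(i)}$ is a run of $k$ consecutive integers), and both establish regularity by counting that each point of $[m]$ lies in exactly $nk/m=\frac{k}{\gcd(m,k)}$ blocks because $m\mid nk$. Your write-up is merely more explicit than the paper's terse version---in particular, you spell out why the multiset Hall's condition collapses to the single-block size requirement when $r=k$, and why the tiling of $[0,nk-1]$ by the windows $I^{(i)}$ yields the equal-multiplicity count---but the substance is identical.
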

\begin{proof}
For any $a,b\in I$, $a\neq b$, if $a\pmod{m}+1=b\pmod{m}+1$, then $m\mid(a-b)$. Since each $I^{(i)}$ consists of $k$ consecutive integers and $k\leq m$,
then $|B_i|=k$ for any $i\in[n]$. We only need to prove that the code is regular. This holds since each $i\in[m]$ appears in exactly $\frac{nk}{m}=\frac{k}{\gcd{(m,k)}}$ blocks in $\cB$.
\end{proof}

\begin{cor}
\label{rmcbc} $\mu(n,k,m;k)=\frac{kn}{m}$ if and only if $n=c\cdot\frac{m}{\gcd{(m,k)}}$ for some integer $c\geq 0$.
\end{cor}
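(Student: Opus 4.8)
The goal is to characterize when a regular $(n,k,m;k)$-MCBC attains the minimum possible per-server storage $\mu(n,k,m;k)=kn/m$. The plan is to prove both directions of the equivalence separately.

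\medskip

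\textbf{Sufficiency.} First I would establish that $n=c\cdot\frac{m}{\gcd(m,k)}$ implies $\mu(n,k,m;k)=\frac{kn}{m}$. The case $c=0$ (i.e. $n=0$) is vacuous. For $c\geq 1$, note that $N(n,k,m;k)=kn$ by Lemma~\ref{trivial_bound}~(iii), so Lemma~\ref{bound_reg} gives $\mu(n,k,m;k)\geq\lceil kn/m\rceil=kn/m$, the last equality holding because $m\mid kn$ when $\frac{m}{\gcd(m,k)}\mid n$. To show this lower bound is achieved, I would build a regular MCBC with exactly $kn/m$ items per server. For $c=1$ this is precisely Construction~\ref{reg_cons}, which the preceding theorem already certifies to be a regular $(n,k,m;k)$-MCBC with $|B_i|=k$ for every block and each point of $[m]$ appearing in $\frac{kn}{m}$ blocks. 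For general $c\geq 1$, I would take $c$ disjoint copies of the $n_0=\frac{m}{\gcd(m,k)}$ blocks from Construction~\ref{reg_cons}, yielding $cn_0=n$ blocks, each of size $k$; each point of $[m]$ then lies in $c\cdot\frac{kn_0}{m}=\frac{kn}{m}$ blocks, so regularity is preserved, and the multiset Hall's condition still holds since each individual copy already satisfies it (the condition only concerns the sizes of unions of blocks, and every block has full size $k$, giving $|\cup_{j=1}^h B_{i_j}|\geq k=\min\{hr,k\}$ for $r=k$ trivially).

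\medskip

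\textbf{Necessity.} Here I would show that $\mu(n,k,m;k)=\frac{kn}{m}$ forces $\frac{m}{\gcd(m,k)}\mid n$. Suppose a regular $(n,k,m;k)$-MCBC exists with per-server storage $\mu=\frac{kn}{m}$, so $N=m\mu=kn$. In any $(n,N,k,m;k)$-MCBC the multiset Hall's condition with $r=k$ requires, for $h=1$, that every single block satisfy $|B_i|\geq\min\{k,k\}=k$; since we may always reduce blocks to size $k$ (blocks of size $>k$ are wasteful as noted in the text), attaining the storage $N=kn$ means every block has size \emph{exactly} $k$. Thus each of the $n$ items is stored in exactly $k$ servers, and the total storage $kn$ is distributed as $\mu$ items per server. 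The key counting identity is then that $m\mu=kn$ holds with $\mu$ an \emph{integer}, which already gives $m\mid kn$, i.e. $\frac{m}{\gcd(m,k)}\mid n$. This is the crux: writing $d=\gcd(m,k)$, $m=dm'$, $k=dk'$ with $\gcd(m',k')=1$, the divisibility $m\mid kn$ becomes $m'\mid k'n$, and since $\gcd(m',k')=1$ we conclude $m'=\frac{m}{\gcd(m,k)}\mid n$, which is exactly the claimed form $n=c\cdot\frac{m}{\gcd(m,k)}$.

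\medskip

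\textbf{Main obstacle.} The delicate point is justifying in the necessity direction that $\mu=\frac{kn}{m}$ \emph{forces} every block to have size exactly $k$, rather than merely on average. The argument is that $\mu$ being an integer and equal to $kn/m$ leaves no slack: since $N=m\mu=kn=\sum_i |B_i|$ and each $|B_i|\geq k$ (from the $h=1$ multiset Hall's condition at $r=k$), equality $\sum_i|B_i|=kn$ with $n$ summands each at least $k$ forces $|B_i|=k$ for all $i$. Once this is clear the integrality of $\mu$ does all the number-theoretic work, so I expect the only real content to be packaging the elementary divisibility step $m\mid kn \iff \frac{m}{\gcd(m,k)}\mid n$ cleanly.
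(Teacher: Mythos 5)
Your proof is correct and follows essentially the same route as the paper's: necessity comes from the integrality of $\mu(n,k,m;k)=kn/m$ (equivalently $m\mid kn$, hence $\frac{m}{\gcd(m,k)}\mid n$), and sufficiency from the lower bound $\mu\geq kn/m$ together with taking $c$ copies of the blocks of Construction~\ref{reg_cons}. The only difference is cosmetic: your digression forcing every block to have size exactly $k$ in the necessity direction is not actually needed, since, as you yourself note, the integrality of $\mu$ alone carries the argument.
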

\begin{proof}
If $\mu(n,k,m;k)=\frac{kn}{m}$, then the value of $n$ satisfies $n=c\cdot\frac{m}{\gcd{(m,k)}}$ for some integer $c\geq 0$ so that $\mu(n,k,m;k)$ is an integer. For any such $n$, Construction~\ref{reg_cons} gives a code with the desired parameters as follows. Assume that $n=c\cdot\frac{m}{\gcd{(m,k)}}$, and let $X=[m]$ and $\hat{\cB}$ consist of $c$ copies of all the blocks of $\cB$ in Construction~\ref{reg_cons}. Then, $(X,\hat{\cB})^*$ is a regular $(n,k,m;k)$-MCBC.
\end{proof}

\section{Conclusion}\label{sec:conclusion}

In this paper, we generalized combinatorial batch codes to multiset combinatorial batch codes and regular multiset combinatorial batch codes. Several bounds and constructions of optimal codes were obtained.

\section*{Acknowledgments}

The authors would like to thank Prof. Tuvi Etzion for valuable discussions.

\end{document}